\documentclass{article}
\usepackage[margin=1in]{geometry}
\usepackage{hyperref}
\usepackage{microtype}
\usepackage{amsthm}
\usepackage{mathrsfs}
\usepackage{tabularx}
\usepackage{xfrac}
\usepackage{enumitem}
\usepackage{tikz}
\usepackage{circuitikz}
\usetikzlibrary{calc}
\ctikzset{logic ports=ieee, logic ports/scale=0.8}
\usepackage[export]{adjustbox}

\usepackage{graphicx}
\graphicspath{{fig/}}
\newcommand\fig[2][]{\includegraphics[scale=0.394,#1]{#2}}

\usepackage[backend=biber,style=alphabetic]{biblatex}
\addbibresource{paper.bib}

\newtheorem{theorem}{Theorem}[section]
\newtheorem{lemma}[theorem]{Lemma}
\newtheorem{corollary}[theorem]{Corollary}
\newtheorem{proposition}[theorem]{Proposition}
\newtheorem{claim}[theorem]{Claim}

\theoremstyle{definition}
\newtheorem{definition}[theorem]{Definition}
\newtheorem{problem}[theorem]{Problem}

\def\C{\mathscr C}
\def\U{\mathscr U}
\def\concat{\mathbin{+\!+}}

\usepackage{caption}
\captionsetup{width=0.9\textwidth}

\newcommand*{\defn}[1]{\textbf{\textit{\boldmath{#1}}}}

\begin{document}

\title{Complexity of Planar Graph Orientation Consistency, Promise-Inference, and Uniqueness, \\
  with Applications to Minesweeper Variants}

\author{%
  MIT Hardness Group%
    \thanks{Artificial first author to highlight that the other authors (in
      alphabetical order) worked as an equal group. Please include all
      authors (including this one) in your bibliography, and refer to the
      authors as ``MIT Hardness Group'' (without ``et al.'').}
\and
  Della Hendrickson%
    \thanks{MIT Computer Science and Artificial Intelligence Laboratory,
      32 Vassar St., Cambridge, MA 02139, USA, \protect\url{{della,tockman}@mit.edu}}
\and
  Andy Tockman\footnotemark[2]
}

\date{}

\maketitle

\begin{abstract}
    We study three problems related to the computational complexity
    of the popular game Minesweeper.
    The first is consistency:
    given a set of clues,
    is there any arrangement of mines that satisfies it?
    This problem has been known to be NP-complete since 2000 \cite{cons},
    but our framework proves it as a side effect.
    The second is inference:
    given a set of clues,
    is there any cell that the player can prove is safe?
    The coNP-completeness of this problem has been in the literature since 2011 \cite{inf},
    but we discovered a flaw that we believe is present in all published results,
    and we provide a fixed proof.
    Finally,
    the third is solvability:
    given the full state of a Minesweeper game,
    can the player win the game by safely clicking all non-mine cells?
    This problem has not yet been studied,
    and we prove that it is coNP-complete.
\end{abstract}

\section{Introduction}

The puzzle-based video game Minesweeper
was popularized by its inclusion in the default installation of Windows 3.1 in 1992,
and to this day is one of the most widely recognizable computer games.
Though the premise is very simple,
it has spawned an endless stream of spinoffs,
and a number of communities dedicated to challenges
such as completing a randomly generated game as quickly as possible.

From a computational complexity perspective,
Minesweeper is interesting
in that there are multiple natural decision problems to study.
The simplest is \emph{consistency}
(`given some clues, is there a possible arrangement of mines?'),
which was proved NP-complete in 2000 \cite{cons}.

But consistency doesn't capture the essence of Minesweeper as a game,
where the player has some partial information and tries
to find a cell that they can click on safely, leading to more information.
This suggests the \emph{inference} problem
(`given some clues, is there a cell that is provably safe?'),
which was shown coNP-complete in 2011 \cite{inf}.
Unfortunately,
this proof of coNP-hardness of Minesweeper inference
is incorrect,
and Thieme and Basten's proof \cite{indeed},
which is designed to use very small gadgets,
suffers from the same issue.

While inference asks about a single `turn' of Minesweeper,
it is also natural to ask a related question about the entire game.
We introduce the \emph{solvability} decision problem,
in which we are given both the current state of a Minesweeper game
and also the secret arrangement of mines,
and asked whether the player can make a sequence of safe clicks
to solve the game.

In this paper, we develop a framework based on graph orientation
to prove coNP-completeness of Minesweeper inference and solvability.
As a side effect, the same gadgets also suffice to prove NP-hardness
of Minesweeper consistency.
Of particular note, we prove that Minesweeper solvability is coNP-complete
even `after a single click':
from a nearly empty initial state with only one cell revealed.

Specifically, we define three graph orientation
decision problems (consistency, promise-inference, and uniqueness)
related to the Minesweeper problems,
and show that each is hard with a particular set of simple abstract gadgets.
It follows that finding well-behaved constructions in Minesweeper
which behave like those gadgets
is enough to prove hardness for all three Minesweeper problems.

In Section~\ref{sec:prior},
we define each decision problem carefully,
summarize the previously known results,
and explain the flaw in the existing reduction for inference.
In Section~\ref{sec:framework},
we develop a framework of gadgets
that makes it easy to prove hardness results
for all three decision problems.
% In Section~\ref{sec:min},
% we expand upon the framework
% by explicitly listing several minimal sets of gadgets
% which are enough for hardness.
Finally,
in Section~\ref{sec:variants} we apply the framework
to Minesweeper
and to many variants of Minesweeper
from the video game \emph{14 Minesweeper Variants}.
For the most part, each application consists entirely
of constructions of the relevant gadgets
in the Minesweeper variant under consideration.
% TODO Finally,
% in Section~\ref{sec:summary}
% we summarize known hardness for all combinations of variants,
% as well as the cases that remain open.

\section{Prior work and definitions} \label{sec:prior}

\subsection{Consistency}

Research on the computational complexity of Minesweeper
began when Sadie Kaye \cite{cons}
posed the Minesweeper consistency problem.
Informally,
this problem asks whether a partially completed Minesweeper board
has a legal arrangement of mines.
We provide a formal definition here.

\begin{definition}
    A \defn{partial board} is a two-dimensional rectangular array,
    where each entry is either a \defn{covered cell} or an \defn{uncovered cell}.
    A covered cell is an unknown cell which may or may not be a mine.
    An uncovered cell has an integer (and is known to not contain a mine),
    representing a Minesweeper clue.
    For a partial board \(B\),
    we denote the set of covered cells \(\C(B)\),
    and the set of uncovered cells \(\U(B)\).
\end{definition}

\begin{definition}[Minesweeper consistency problem] \label{def:cons}
    A partial board \(B\) is \defn{consistent}
    if there exists a set \(M\subseteq\C(B)\),
    representing all locations of mines,
    such that $M\cap\U(B)=\emptyset$
    and the integer in each uncovered cell \(c\in\U(B)\)
    counts the number of cells in $M$
    which are orthogonally or diagonally adjacent to \(c\).
    Otherwise, \(B\) is \defn{inconsistent}.
    See Figures~\ref{fig:consistent} and~\ref{fig:inconsistent}.
    The input to the \defn{Minesweeper consistency problem}
    is a partial board,
    and the problem asks whether it is consistent.
\end{definition}

\begin{figure}
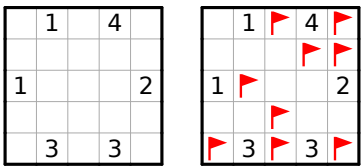

    \centering
    \fig{consgood1}
    \fig{consgood2}
    \caption{Left: an example of a consistent board. Right: one possible way to satisfy all clues.}
    \label{fig:consistent}
\end{figure}

\begin{figure}
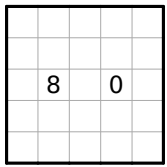

    \centering
    \fig{consbad}
    \caption{An example of an inconsistent board.}
    \label{fig:inconsistent}
\end{figure}

Kaye proves that the consistency problem is NP-complete \cite{cons}.
We also prove NP-completeness as a side effect of our framework.

\begin{proposition}
    The Minesweeper consistency problem is in NP.
\end{proposition}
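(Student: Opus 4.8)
The plan is to exhibit a polynomial-size certificate that can be verified in polynomial time, which is exactly the definition of NP membership.

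The plan is to exhibit a polynomial-size certificate for consistency that is checkable in polynomial time, which is exactly what membership in NP requires. The natural certificate is the mine set $M$ itself. Given a partial board $B$ with $n$ cells, any candidate $M \subseteq \C(B)$ can be recorded as one bit per cell (mine or not), so its description has size $O(n)$, polynomial in the input length.

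Next I would describe the verifier. On input consisting of $B$ together with a claimed mine set $M$, it first checks that $M \cap \U(B) = \emptyset$, i.e.\ that no purported mine sits on an uncovered cell. It then sweeps over the uncovered cells: for each $c \in \U(B)$ it counts how many cells of $M$ are orthogonally or diagonally adjacent to $c$ and confirms that this count equals the integer clue stored at $c$. Because every cell has at most eight neighbors, the count for a single clue costs $O(1)$, so the entire sweep, and hence the whole verification, runs in $O(n)$ time.

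Correctness is immediate from Definition~\ref{def:cons}: the board $B$ is consistent exactly when some $M \subseteq \C(B)$ satisfies the two conditions the verifier tests, so the verifier accepts $(B, M)$ for some certificate $M$ of size $O(n)$ if and only if $B$ is consistent. There is no substantial obstacle in this argument; the only point that warrants a moment's care is confirming that both the certificate length and the verification time are bounded by a polynomial in the size of $B$, and both are in fact linear.
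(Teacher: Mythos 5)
Your proposal is correct and follows exactly the paper's approach: the certificate is the mine set $M$ from Definition~\ref{def:cons}, verified by sweeping over $\U(B)$ and checking each clue, in polynomial time. You merely spell out the details (the $M\cap\U(B)=\emptyset$ check and the $O(n)$ bound from the eight-neighbor observation) that the paper leaves implicit.
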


\begin{proof}
    Given a partial board \(B\),
    a certificate of consistency is the \(M\) of Definition~\ref{def:cons}.
    We can iterate over \(\U(B)\)
    and check that each clue is satisfied
    in polynomial time.
\end{proof}

We leave the proof of hardness to Section~\ref{sec:framework}.

\subsection{Inference}

The Minesweeper inference problem was first posed by
Allan Scott, Ulrike Stege, and Iris van Rooij \cite{inf}.
Informally,
this problem asks whether a partially completed Minesweeper board
has a logical deduction available to the player
that lets them click on a cell
which is guaranteed to not have a mine.
We provide a formal definition
of a slight variation on the problem as originally posed.

\begin{definition}[Minesweeper inference problem] \label{def:inf}
    Given a partial board $B$,
    an \defn{inference} is a cell $c\in\U(B)$
    such that for all consistent arrangements of mines \(M\subseteq\C(B)\),
    we have \(c\not\in M\).
    The input to the \defn{Minesweeper inference problem}
    is a partial board,
    and it asks whether there is an inference.
    See Figures~\ref{fig:inference} and~\ref{fig:noinference}
\end{definition}

\begin{figure}
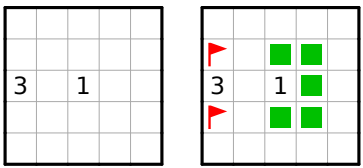

    \centering
    \fig{infgood1}
    \fig{infgood2}
    \caption{Left: an example of a board with an inference. Right: green squares mark cells that can be inferred.}
    \label{fig:inference}
\end{figure}

\begin{figure}
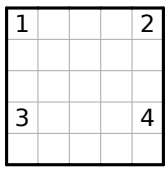

    \centering
    \fig{infbad}
    \caption{An example of a board that doesn't have an inference.}
    \label{fig:noinference}
\end{figure}

% This is a promise problem:
% the input is guaranteed to be consistent.
Note that this definition has two key differences
from the one originally given by Scott, Stege, and van Rooij \cite{inf}:

\begin{itemize}

    \item
        Deducing the location of a mine does not count as an inference.

    \item
        No positions of known mines are given in the input.

\end{itemize}

We prefer Definition~\ref{def:inf}
because it eliminates the need to place conditions on the input
such as ``all given mines must be deducible from the clues,''
which is otherwise necessary to avoid placing mines
that could never be deduced in a real game.
Furthermore,
the flags representing known mines can be viewed as simply a player aid---%
one could in principle play a full game of Minesweeper without marking a single mine.
Though we will proceed with Definition~\ref{def:inf} only,
we remark that all results in this paper work under either definition.

\begin{proposition}
    The Minesweeper inference problem is in coNP.
\end{proposition}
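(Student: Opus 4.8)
The plan is to prove the complementary statement, that deciding whether a partial board has \emph{no} inference lies in NP; since coNP is exactly the class of problems whose complements are in NP, this immediately yields that the inference problem is in coNP.

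First I would unfold the logical structure of the negation. By Definition~\ref{def:inf}, a board $B$ has an inference iff there is a covered cell $c\in\C(B)$ lying outside every consistent arrangement of mines. Negating both quantifiers, $B$ has \emph{no} inference iff \emph{every} covered cell $c\in\C(B)$ belongs to \emph{some} consistent arrangement $M\subseteq\C(B)$ with $c\in M$. This is a $\forall\exists$ statement, and the apparent quantifier alternation is the only feature that could in principle push the problem beyond NP.

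The key observation that dispatches this obstacle is that the universal quantifier ranges over $\C(B)$, a set whose size is polynomial in the input. Hence I can bundle one existential witness per covered cell into a single certificate: a family $\{M_c\}_{c\in\C(B)}$ in which each $M_c\subseteq\C(B)$ is claimed to be a consistent arrangement containing $c$. This certificate has size $O(|\C(B)|^2)$, which is polynomial. The verifier then checks, for each $c\in\C(B)$, that $c\in M_c$ and that $M_c$ is a consistent arrangement for $B$; the latter is precisely the polynomial-time check already established in the proof that consistency is in NP (iterate over $\U(B)$ and confirm each clue counts its adjacent cells in $M_c$). If all checks pass, then every covered cell can host a mine in some consistent arrangement, so no covered cell is provably safe and $B$ indeed has no inference; conversely, if $B$ has no inference, then selecting for each $c$ a consistent arrangement containing it produces a valid certificate. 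This gives an NP verifier and completes the argument.

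Finally I would confirm the degenerate cases so the reduction of quantifiers is watertight: if $B$ is inconsistent yet has a covered cell, then vacuously every covered cell is an inference, so $B$ is a yes-instance and correctly admits no ``no-inference'' certificate (no candidate $M_c$ can pass the consistency check); and if $\C(B)=\emptyset$, the empty family trivially certifies the absence of an inference. I expect no genuine difficulty here beyond stating the bundling trick cleanly: the consistency-verification subroutine is already in hand, so the remaining work is just bookkeeping on certificate size and these edge cases.
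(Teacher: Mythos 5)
Your proof is correct and matches the paper's own argument: the paper's certificate of noninference is exactly your family of consistent arrangements $M_c$ with $c\in M_c$ for each covered cell $c\in\C(B)$, verified by the polynomial-time consistency check. Your treatment of the edge cases and certificate-size bookkeeping is just a more explicit writeup of the same idea.
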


\begin{proof}
    Given a partial board \(B\),
    a certificate of noninference is,
    for each cell \(c\in\C(B)\),
    a consistent arrangement of mines \(M\)
    with \(c\in M\).
    We can iterate over the polynomially many arrangements
    given by the certificate
    and check consistency in polynomial time.
\end{proof}

Again,
the proof of hardness will be in Section~\ref{sec:framework}.

\paragraph{Error in existing proofs.}
Scott, Stege, and Rooij's proof of coNP-hardness \cite{inf}
has an issue where there is sometimes an unintended inference.
Their OR gate is shown in Figure~\ref{fig:bador}.
A cell has a mine when the literal marking it is true.
The inputs are $u$ and $v$ the output is $a$.
The 6 enforces $a+b=u+v$, and the section
at the bottom enforces $a\geq b$.
Together this forces $a=u\lor v$ and $b=u\land v$.

\begin{figure}
    \centering
    \includegraphics[scale=.3]{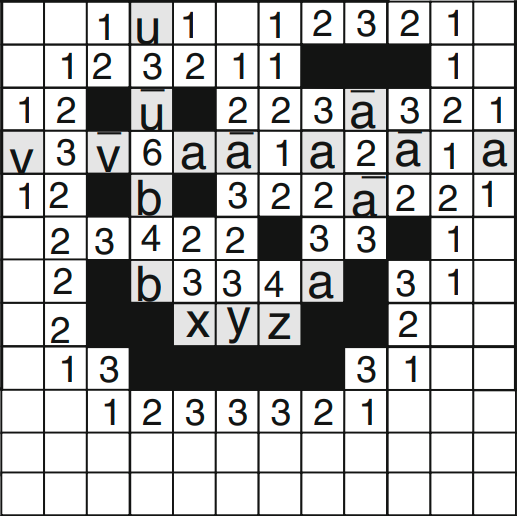}
    \caption{
      The OR gate from \cite{inf} Figure~16.
      There is a minor typo:
      the lower cell with $b$ should have $\overline b$.
    }
    \label{fig:bador}
\end{figure}

The issue occurs when we know that $u$ and $v$ can't both be true.
This might happen if the OR gate is used inside an AND gate
which merges two clauses which can't simultaneously be false,
such as if one contains $x$ and the other contains $\lnot x$.

In this case, we can deduce that $b=u\land v$ is false,
and thus the (higher) cell labeled $b$ is safe to click.
This is an inference.
One strategy for resolving this issue would be to reveal that cell
in the input, assuming one can find all inferences like this.
But this doesn't work:
That cell has either a 3 or a 4 depending on the value of $u$,
so revealing it tells the player the value of $u$,
allowing them to make more inferences and
possibly learning further information.

Thieme and Basten's more compact proof \cite{indeed}
has a very similar issue.

Our approach to avoiding this issue is twofold.
First, we design our gadgets to be `silent',
meaning clicking an inferred cell never reveals information.
This allows us to eliminate inferences by revealing those
cells in the input.
This handles unintended `local' inferences,
which we are able to detect.

Second, to prevent unintended larger-scale inferences,
we design the network of gadgets carefully.
Specifically, we ensure that for every gadget
(OR gate, crossover, etc.),
every locally valid combination of values can be achieved.
The only exception is the `final' gate, which has a forced output
when the input formula is unsatisfiable.
We maintain this property across simulations by introducing
a problem we call `promise-inference',
which also partially relaxes the constraint
that every locally valid solution is achievable.

\subsection{Solvability}

Informally,
the Minesweeper solvability problem asks whether a player can
make a sequence of inferences from a partially completed Minesweeper game
and win by clicking on all non-mine cells.
We provide a formal definition of this problem,
which to our knowledge has not been studied.

\begin{definition}[Minesweeper solvability problem] \label{def:solv}
    Let $B$ be a partial board with uncovered cells $K=\U(B)$,
    and let $M$ be a set of mines consistent with $B$.
    Note that $B$ is determined by $M$ and $K$.
    Here \(M\) represents the secret set of mines,
    which is thought of as unknown to the player,
    and \(K\) represents the set of known cells.

    Consider an ordering \(O\) of \(G\setminus(M\cup K)\),
    meaning a list containing each element of
    \(G\setminus(M\cup K)\) exactly once.
    For an element \(o\in G\setminus(M\cup K)\),
    let \(O = O_{init} \concat [o] \concat O_{tail}\),
    where \(\concat\) denotes concatenation.
    We say $M$ is \defn{solvable} from $K$
    if there exists an \(O\)
    such that for all \(o\in O\),
    \(o\) is a inference
    for the (unique) partial board consistent with $M$
    whose uncovered cells are \(K\cup O_{init}\).
    Otherwise, $M$ is \defn{unsolvable} from $K$.
    See Figure~\ref{fig:solvable}.

    The input to the \defn{Minesweeper solvability problem}
    consists of the dimensions of a rectangular grid \(G\)
    and two disjoint subsets of cells \(M\subseteq G\) and \(K\subseteq G\).
    The problem asks whether $M$ is solvable from $K$.
\end{definition}

\begin{figure}
    \centering
    \fig{solvgood}
    \caption{
        An example of a solvable board.
        Cyan cells indicate elements of \(K\),
        and white cells are unknown to the player.
        Red flags indicate elements of $M$.
        From the initial state,
        the only provably safe cell is the fourth cell in the second row.
        After clicking it,
        the player can now deduce the third cell in the first row.
        Finally,
        this reveals enough information to deduce the second cell in the first row.
    }
    \label{fig:solvable}
\end{figure}

Intuitively,
solvability simulates a player who is given a Minesweeper puzzle on their computer to solve.
If such a player wishes to guarantee that they do not click a mine
(to win without any luck,
or because they have antagonistic implementation
which repositions mines to make the player lose if they
click a cell that isn't provably safe,
such as `expert mode' in \emph{14 Minesweeper Variants}),
they must click covered cells in an order \(O\)
that satisfies Definition~\ref{def:solv}.

\begin{proposition}
    The Minesweeper solvability problem is in coNP.
\end{proposition}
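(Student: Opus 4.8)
The plan is to show that the Minesweeper solvability problem lies in coNP by exhibiting a polynomial-size certificate for the complement (unsolvability) that can be verified in polynomial time. Recall that $M$ is solvable from $K$ precisely when there exists an ordering $O$ of the remaining cells $G\setminus(M\cup K)$ such that, at each step, the cell about to be clicked is a valid inference given everything revealed so far. The natural ``witness of solvability'' would be such an ordering $O$ itself, but verifying it requires checking at each of the polynomially many steps that a cell is an inference, and \emph{inference is a coNP-complete problem} by the earlier discussion. So a direct ordering witness does not obviously land us in coNP; this tension is the crux of the argument.

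**The key observation** is that solvability has a clean greedy structure: at each stage, the set of cells that are provably safe (inferences) is determined by the currently revealed board, and \emph{clicking a safe cell never removes safety from any other cell}---revealing more information can only create new inferences, never destroy existing ones, since an inference is a statement quantified over all consistent mine arrangements, and uncovering a safe cell only shrinks the set of consistent arrangements. Consequently, $M$ is solvable from $K$ if and only if the greedy process---repeatedly uncovering \emph{all} currently inferable safe cells simultaneously until no new safe cell appears---eventually uncovers all of $G\setminus M$. There is no need to guess the order; the reachable set of safely-clickable cells is canonical. So unsolvability is equivalent to: the greedy closure of $K$ under inference fails to reach some non-mine cell.

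**To certify unsolvability** I would therefore provide the following. First I would identify the greedy fixed point: the process proceeds in at most $|G|$ rounds, and in each round the newly-safe cells are added. If the process halts before reaching all of $G\setminus M$, let $S$ be the final uncovered set and let $c\in (G\setminus M)\setminus S$ be an unreached non-mine cell. The certificate of unsolvability consists of two parts. (1) A record of the greedy rounds together with, for each non-inference cell at the halting stage, a witness that it is \emph{not} an inference---namely, for each covered cell $c'$ outside $S$, a consistent mine arrangement (extending the revealed board $S$) that places a mine at $c'$. This is exactly the ``certificate of noninference'' from the earlier coNP proposition, supplied at the halting board, showing no further progress is possible. (2) The identification of the stuck non-mine cell $c$, witnessing that not all of $G\setminus M$ was reached. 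Since the greedy process is monotone and canonical, I do not need to verify that \emph{this} halting set is reached by \emph{some} ordering---it is reached by \emph{every} maximal ordering, so exhibiting noninference for all covered cells at the halting board certifies that the process genuinely stalls there.

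**The main obstacle**, and the step deserving the most care, is justifying the monotonicity claim rigorously and confirming that the certificate size stays polynomial. Monotonicity---that uncovering a provably-safe cell cannot destroy the inferability of another cell---follows because the set of consistent arrangements only shrinks when we reveal a true (mine-free) value, so any universally-quantified safety statement is preserved; I would state this as a short lemma. For the size bound, the noninference certificate at the halting board requires one consistent arrangement per covered cell, i.e.\ $O(|G|)$ arrangements each of size $O(|G|)$, and verifying each against the $O(|G|)$ clues takes polynomial time, exactly mirroring the inference-in-coNP argument. The only subtlety is ensuring the greedy rounds themselves need not be re-verified by the checker (which would reintroduce a coNP subproblem); I sidestep this by having the certificate directly assert the \emph{final} halting board $S$ and its noninference data, from which the verifier confirms (a) $S\supseteq K$ and $S\cap M=\emptyset$, (b) every covered cell admits the supplied mine-placing consistent arrangement, hence no cell outside $S$ is safe, and (c) some non-mine cell lies outside $S$. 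This is checkable in polynomial time, placing solvability in coNP.
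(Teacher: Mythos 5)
Your proposal is correct and takes essentially the same approach as the paper: both certify unsolvability by a ``stuck'' uncovered set ($S$ in your notation, $K'$ in the paper's) containing $K$ together with a certificate of noninference for that board, and both rely on monotonicity of inference---revealing more safe cells only shrinks the set of consistent arrangements---to conclude that in any ordering the first cell outside this set cannot be safely clicked. You are slightly more explicit than the paper, which leaves the monotonicity lemma and the requirement that some non-mine cell lie outside $K'$ implicit, but the argument is the same.
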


\begin{proof}
    Given a game \((M,K)\),
    a certificate of unsolvability is a set \(K'\supseteq K\)
    together with a certificate of noninference for \(K'\).

    Suppose there is such a certificate.
    As long as the information available to the player is a subset of $K'$,
    they cannot infer that a cell outside $K'$ is safe.
    Any order has a first cell outside $K'$,
    which by assumption is not an inference.

    Conversely, suppose an instance is unsolvable.
    Starting from $K$, repeatedly make inferences
    and add them to the known information.
    By assumption, this must get stuck before all non-mine cells are uncovered,
    meaning we reach a point with no inferences.
    Then take $K'$ to be the uncovered cells at that point.
\end{proof}

Once again,
we prove hardness in Section~\ref{sec:framework}.

\section{The Minesweeper gadget framework} \label{sec:framework}

In this section, we describe an abstract framework which we will later apply to Minesweeper.
The wires in Minesweeper hardness proofs generally have two states, which can be thought of as two orientations of an edge, so our framework is a general form of graph orientation.

\begin{definition}
A \defn{gadget} is an abstract object which has
\begin{itemize}
  \item a finite set of \defn{ports}, in a specified cyclic order
    (we will generally list the ports in this order,
    and describe it more explicitly when relevant).
  \item a \defn{constraint}, which is a set of subsets of the ports.
\end{itemize}
\end{definition}

Gadgets will interact via directed edges connecting ports.
The constraint says which sets of edges pointing towards the gadget should be considered legal.

The gadgets we name are collected in Table~\ref{tbl:gadgets},
and will also be described as they come up.

\begin{table}
    \centering
    {\renewcommand\arraystretch{2.5}
      \begin{tabular}{lll}
        Name & Icon & Constraint \\\hline

        fixed (true) terminal
        &
        \begin{circuitikz}[baseline=(current bounding box.center)]
          \draw (0,0) node[left] {$a$} -- (.5,0) node[right] {\(F\)};
        \end{circuitikz}
        &
        \(\{\{\}\}\)

        \\

        fixed (false) terminal
        &
        \begin{circuitikz}[baseline=(current bounding box.center)]
          \draw (0,0) node[left] {$a$} -- (.5,0) node[right] {\(T\)};
        \end{circuitikz}
        &
        \(\{\{a\}\}\)

        \\

        free terminal
        &
        \begin{circuitikz}[baseline=(current bounding box.center)]
          \draw (0,0) node[left] {\(a\)} -- (.5,0) node[right] {\phantom T};
        \end{circuitikz}
        &
        \(\{\{\}, \{a\}\}\)

        \\

        AND gate
        &
        \begin{circuitikz}[baseline=(current bounding box.center)]
          \draw (0,0) node[and port] (p) {} (p.in 1) node[left] {\(a\)} (p.in 2) node[left] {\(b\)} (p.out) node[right] {\(c\)};
        \end{circuitikz}
        &
        \(\{\{c\}, \{a,c\}, \{b,c\}, \{a,b\}\}\)

        \\

        OR gate
        &
        \begin{circuitikz}[baseline=(current bounding box.center)]
          \draw (0,0) node[or port] (p) {} (p.in 1) node[left] {\(a\)} (p.in 2) node[left] {\(b\)} (p.out) node[right] {\(c\)};
        \end{circuitikz}
        &
        \(\{\{c\}, \{a\}, \{b\}, \{a,b\}\}\)

        \\

        NOT gate
        &
        \begin{circuitikz}[baseline=(current bounding box.center)]
          \draw (0,0) node[not port] (p) {} (p.in) node[left] {\(a\)} (p.out) node[right] {\(c\)};
        \end{circuitikz}
        &
        \(\{\{\}, \{a,c\}\}\)

        \\

        NOR gate
        &
        \begin{circuitikz}[baseline=(current bounding box.center)]
          \draw (0,0) node[nor port] (p) {} (p.in 1) node[left] {\(a\)} (p.in 2) node[left] {\(b\)} (p.out) node[right] {\(c\)};
        \end{circuitikz}
        &
        $\{\{\},\{a,c\},\{b,c\},\{a,b,c\}\}$
        \\

        ($k$-way) fanout gate
        &
        \begin{circuitikz}[baseline=(current bounding box.center)]
          \draw (0,0) node[
          muxdemux,
          muxdemux def={NL=4, NB=0, Lh=3, Rh=2, w=0.4, inset Lh=1},
          rotate=180
          ] (p) {} (p.rpin 1) node[left] {\(a\)} (p.lpin 1) node[right] {\(c_k\)} (p.lpin 2) node[right] {\(c_{k-1}\)} (p.lpin 3) node[right] {\(c_2\)} (p.lpin 4) node[right] {\(c_1\)} (0.7,0.15) node{\(\vdots\)};
        \end{circuitikz}
        &
        \(\{\{a\}, \{c_1,\ldots,c_k\}\}\)

        \\

        1-in-3 gadget
        &
        \begin{circuitikz}[baseline=(current bounding box.center)]
          \draw
          (-0.75,0) node[left] {\(a\)} -- (-0.25,0)
          (0,0.75) node[above] {\(b\)} -- (0,0.25)
          (0.75,0) node[right] {\(c\)} -- (0.25,0)
          (0,-0.5) node {}
          (0,0) circle (0.25) node {\sfrac13};
        \end{circuitikz}
        &
        \(\{\{a\}, \{b\}, \{c\}\}\)

        \\

        crossover
        &
        \begin{circuitikz}[baseline=(current bounding box.center)]
          \draw
          (-0.5,0) node[left] {\(a_1\)} to[crossing] (0.5,0) node[right] {\(a_2\)}
          (0,0.5) node[above] {\(b_1\)} -- (0,-0.5) node[below] {\(b_2\)};
        \end{circuitikz}
        &
        \(\{\{a_1,b_1\},\{a_2,b_1\},\{a_1,b_2\},\{a_2,b_2\}\}\)

      \end{tabular}
    }
    \caption{
      The gadgets we define in this paper,
      the icons we use to draw them (for those we show in networks),
      and their constraints.
    }
    \label{tbl:gadgets}
\end{table}

\begin{definition}
A \defn{network} of gadgets from a set $S$ is an undirected graph where
\begin{itemize}
  \item each vertex is labeled with a gadget from $S$
  \item if $G\in S$ has $k$ ports, each vertex labeled $G$ has degree $k$ and its edge incidencies are labeled in a bijection with ports of $G$.
\end{itemize}

A \defn{planar network} is such a graph equipped with a planar embedding, such that the cyclic order of edges around each vertex matches the order of the ports of the corresponding gadget.
\end{definition}

We often equivocate between vertices and gadgets, and between edge incidencies and ports---think of each vertex as a copy of its label, and think of edges as connecting ports to ports in a matching.

We draw planar networks using the icons in Table~\ref{tbl:gadgets},
which also indicate the correspondence between
edges and ports (except for when it doesn't matter by symmetry).

\begin{definition}
An \defn{assignment} to a (planar) network assigns a direction to each edge.
A vertex is \defn{satisfied} if the set of (labels of) edges pointing into it
is in its (label's) constraint.
An assignment is \defn{satisfying} if every vertex is satisfied.
\end{definition}

Planar Graph Orientation (PGO) is the study of satisfying assignments of planar networks.

\subsection{Gates as gadgets}\label{sec:pgo gates}

One important kind of gadget is logic gates,
which can be interpreted as gadgets: inputs are edges entering from the left and output are edges exiting on the right.
Interpret pointing right as true.
The gadget's constraint allows the inputs to be arbitrary,
but forces the correct outputs for each combination of inputs.
More precisely, for each subset $S$ of input ports,
the constraint contains $S\cup T$,
where $T$ is the set of outputs that are \emph{false} when precisely the inputs in $S$ are true.

It's important to keep in mind distinction between gate gadgets,
which compute a value,
and ``normal'' gadgets, which enforce a constraint.

For example, the \defn{OR gate} has ports $\{a,b,c\}$ and constraint $\{\{c\},\{a\},\{b\},\{a,b\}\}$.
This constraint allows any subset of the input ports $a$ and $b$ to have edges pointing in,
and enforces that the edge incident to the output port $c$ to point out exactly when at least one input port points in.
In other words, it computes the OR of $a$ and $b$, and outputs it at $c$.

On the other hand, the \defn{OR gadget}
(which we don't need beyond this example)
has only two ports $\{a,b\}$
and constraint $\{\{a\},\{b\},\{a,b\}\}$.
It enforces that at least one edge points in,
or that at least one input is true.
This is equivalent to an OR gate with the output forced to be ``true'' (pointing away).

Any boolean circuit can be represented as a network of gate gadgets,
by attaching input ports to output ports in the natural way.
Allow us to leave inputs and outputs to the full circuit
as ``dangling'' edges for now.
It follows from the definition of gate gadgets---formally,
one can induct on the depth of a gate---that
this network has exactly one satisfying assignment
for each combination of orientations of the dangling input edges,
and in each the orientations of the output edges
encode the output of the circuit
(and internal edges encode the internal state of the circuit).

If an output connects to $k>1$ inputs,
we need to use a $k$-way \defn{fanout gate},
which has ports $\{a,c_1,\dots,c_k\}$
and constraint $\{\{a\},\{c_1,\dots,c_k\}\}$.
This is the gadget representing the gate that duplicates its input $k$ times.

A drawing of a circuit in the plane becomes a planar network of gate gadgets.
A crossing pair of wires is translated to the \defn{crossover gate},
which has two inputs and two outputs which match the inputs
in the opposite order.
As a gadget, the crossover gate has ports $\{a_1,b_1,a_2,b_2\}$,
importantly in that cyclic order.
It has constraint $\{\{a_1,b_1\},\{a_1,b_2\},\{a_2,b_1\},\{a_2,b_2\}\}$.
One can think of it as two wires $a_1\to a_2$ and $b_1\to b_2$
that cross in a circuit.
However, the directions of these wires are arbitrary
because the gadget is highly symmetric.
As a gadget outside the perspective of networks of logic gates,
the crossover gate is two edges that can independently be assigned
orientations, and which cross each other.
So we will also call it simply the \defn{crossover}.

\subsection{Decision problems}

We consider three decision problems about planar graph orientation,
corresponding to the Minesweeper decision problems we're interested in.
Each of them is parameterized by a set of gadgets $S$---throughout
we consider only finite sets of gadgets---and takes as input a planar network $N$ of gadgets from $S$.

The problem corresponding to Minesweeper consistency is straightforward.

\begin{problem}
PGO \defn{consistency} with $S$ asks whether $N$ has a satisfying assignment.
\end{problem}

For Minesweeper inference, we need to avoid a subtle issue,
which is the error in prior claims of coNP-hardness \cite{inf,indeed}.
If there is a gadget for which we can deduce that
some legal combination of edge orientations
can't be extended to a full satisfying assignment,
this information may allow us to infer the value
of a Minesweeper cell internal to the gadget.
This can happen even if we can't deduce the orientation of any particular
edge,
so the most obvious PGO inference problem fails to reduce to Minesweeper,
and thus isn't useful.

Our strategy for resolving this issue will ultimately be to ``click'' all
cells in the Minesweeper instance that could be deduced in this way.
To make this work, we will need the values of those cells
to not reveal any additional information
(a property we will call ``silence''),
and we need the reduction to find all such cells in polynomial time.
For our gadget framework to help here,
we need to define the inference problem carefully,
and in particular it needs to be aware of which combinations of edges
are ruled out by ``semilocal'' deductions.

\begin{problem}
PGO \defn{promise-inference} with $S$ is given a network $N$
as well as, for each vertex in $N$, a \defn{semilocal constraint}
which is a subset of its constraint.
We say an edge $e$ is \defn{locally forced} if the semilocal constraint
of one of its vertices requires it has a particular orientation
(because it's either in every element or in no element),
and \defn{locally free} otherwise.

We are promised that
\begin{itemize}
  \item in every satisfying assignment, the set of edges pointing into a vertex is in its semilocal constraint; and
  \item either
  \begin{itemize}
    \item there is a locally free edge to which every satisfying assignment assigns the same direction; or
    \item for each vertex $v$ and set of edges in its semilocal constraint, there's a satisfying assignment that makes exactly those edges point towards $v$.
  \end{itemize}
\end{itemize}
We are asked to determine whether there is an edge whose orientation can be inferred but isn't immediate from semilocal constraints;
that is, whether we are in the first option above.
\end{problem}

Note that the two options can't simultaneously be true,
since the existence of such an edge would imply that its incident vertices' semilocal constraints have elements that can't be extended to satisfying assignments.

The intent of semilocal constraints is to be between the levels of individual gadgets' ``local'' constraints
and ``global'' deductions that require understanding the entire network.
A semilocal constraint typically contains the sets of in-pointing edges
that are attainable when looking at some constant-size neighborhood of a gadget.

The first part of the promise ensures that semilocal constraints are actually enforced by the structure of the network.
In the case where there's no inferable edge,
the second part says that it isn't possible to deduce more about
the immediate neighborhood of a gadget than its semilocal constraint.

Finally, the PGO decision problem analogous to Minesweeper solvability is simpler.

\begin{problem}
PGO \defn{uniqueness} with $S$ is given $N$ as well as a satisfying assignment of $N$,
and asks whether it is the unique satisfying assignment.
\end{problem}

PGO uniqueness is related to Minesweeper solvability because it's possible to deduce the orientations of all edges if and only if there's a unique satisfying assignment.

\begin{lemma}\label{lem:pgo containment}
For any set $S$ of gadgets,
\begin{enumerate}
  \item PGO consistency with $S$ is in NP.
  \item PGO promise-inference with $S$ is in promise-coNP.
  \item PGO uniqueness with $S$ is in coNP.
\end{enumerate}
\end{lemma}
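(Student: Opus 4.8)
The plan is to handle the three parts in increasing order of difficulty: parts (1) and (3) are routine guess-and-verify arguments, while part (2), the promise-coNP containment, carries the real content.

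For (1), I would take the certificate to be a satisfying assignment itself, i.e.\ an orientation of each edge, a string of $O(|E(N)|)$ bits. To verify it, I iterate over the vertices of $N$; at each vertex I read off the set of incident edges pointing inward and test membership of that set in the vertex's constraint. Since $S$ is finite, every gadget has a bounded number of ports and a constant-size constraint, so each membership test is constant time and the whole check is polynomial. For (3), I would instead place the complement, non-uniqueness, in NP: given $N$ and a satisfying assignment $A$, a certificate is a second assignment $A'$, and the verifier checks that $A'$ is satisfying (exactly as in part (1)) and that $A' \neq A$, both in polynomial time. Hence uniqueness is in coNP.

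For (2), recall that a promise problem lies in promise-coNP exactly when its complement lies in promise-NP, so it suffices to exhibit, for the NO instances (those satisfying the second option), a polynomial-size witness that no YES instance can produce. The witness I would use is a family of satisfying assignments $\{A_{v,T}\}$, indexed by every vertex $v$ of $N$ together with every set $T$ in the semilocal constraint of $v$, where $A_{v,T}$ is a satisfying assignment under which exactly the edges in $T$ point toward $v$. Because $S$ is finite, each semilocal constraint has constant size, so there are only $O(|V(N)|)$ index pairs and the whole family is polynomial; checking that each $A_{v,T}$ is satisfying and realizes $T$ at $v$ is polynomial by part (1).

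The correctness of this witness is where I expect the real work, and the main obstacle. On a NO instance the second option holds by the promise, so each required $A_{v,T}$ exists and the verifier accepts. The crux is the converse: on a YES instance the verifier must reject every candidate family. Here the first option supplies a locally free edge $e$ whose orientation agrees across all satisfying assignments; say $e$ points toward its endpoint $v$ in every satisfying assignment. Since $e$ is locally free it is in particular not locally forced at $v$, so the semilocal constraint of $v$ contains an element $T'$ in which $e$ does \emph{not} point toward $v$. No satisfying assignment can realize $T'$ at $v$, since that would orient $e$ against its forced direction; hence no valid $A_{v,T'}$ exists and the verifier rejects. Thus a valid full family exists if and only if the instance is a NO instance, which yields the desired promise-NP verifier for the complement and places promise-inference in promise-coNP. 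I expect the delicate point to be precisely this locally-free-edge argument: unpacking the definition of ``locally free'' to produce the unrealizable semilocal set $T'$ that defeats every candidate witness.
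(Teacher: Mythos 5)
Your proposal is correct, and parts (1) and (3) coincide exactly with the paper's proof (satisfying assignment as the consistency certificate; a second satisfying assignment as the nonuniqueness certificate). For part (2) you use a slightly different, but equivalent, certificate: the paper witnesses the \emph{negation of the first option} directly, giving for each locally free edge and each orientation a satisfying assignment realizing that orientation, whereas you witness the \emph{second option} directly, giving for each vertex $v$ and each set $T$ in its semilocal constraint a satisfying assignment $A_{v,T}$ realizing $T$ at $v$. Under the promise these conditions are interchangeable, so both certificates are sound; the difference is only where the small ``unpack local freeness'' argument lives. In your version, rejection on YES instances requires extracting from the forced locally free edge $e$ a semilocal element $T'$ omitting $e$ at the relevant endpoint (which you do correctly); in the paper's version, rejection on YES instances is immediate (the forced edge's disallowed orientation has no witnessing assignment), but \emph{existence} of the certificate on NO instances requires the symmetric unpacking --- a locally free edge has semilocal elements both containing and omitting it, and realizing each (via the second option) yields both orientations. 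Your polynomial-size bound via finiteness of $S$ matches the paper's standing assumption, so the argument goes through; nothing is missing.
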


\begin{proof}
Each problem has a straightforward certificate:
\begin{enumerate}
  \item A satisfying assignment serves as a certificate of consistency.
  \item A certificate that there is no inference consists of, for locally free edge and each orientation, a satisfying assignment that assigns the orientation to the edge.
  \item A second satisfying assignment serves as a certificate of nonuniqueness.
\end{enumerate}
\end{proof}

\subsection{Hardness}\label{sec:pgo hardness}

We now prove hardness of each PGO decision problem for the appropriate class,
with a specific set of gadgets.
The gadget sets are chosen to make the hardness proofs simple;
there are easy ways to reduce the number of different gadgets needed,
and we will further simplify our gadget sets
using simulation in Section~\ref{sec:pgo simplify}.
With Lemma~\ref{lem:pgo containment}, we have completeness in each case.

\begin{theorem}
PGO satisfiability with
free terminals,
fanout gates,
and 1-in-3 gadgets
is NP-hard.
\end{theorem}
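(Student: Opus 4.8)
The plan is to reduce from Planar Positive 1-in-3-SAT: deciding whether a monotone 3-CNF formula whose variable–clause incidence graph is planar admits a truth assignment making exactly one literal true in each clause. This is a classical NP-complete problem, and because our gadget set contains no crossover, starting from a planar source problem is essential---the reduction will have no way to route wire crossings and so must inherit planarity wholesale from the input. The three gadgets map cleanly onto the three ingredients of such a formula: the \emph{free terminal}, with constraint $\{\{\},\{a\}\}$, supplies an unconstrained boolean choice (a variable); the \emph{fanout gate}, with constraint $\{\{a\},\{c_1,\dots,c_k\}\}$, broadcasts that choice to every clause containing the variable; and the \emph{1-in-3 gadget}, whose constraint $\{\{a\},\{b\},\{c\}\}$ forces exactly one incident edge to point inward, is precisely a 1-in-3 clause.

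I would fix the convention that an edge is \emph{true} for a vertex when it points into that vertex. For each variable $x$ occurring in $k$ clauses I would build a variable gadget: a free terminal whose single edge feeds the $a$-port of a $k$-way fanout, whose outputs $c_1,\dots,c_k$ run one to each clause. Unwinding the fanout constraint shows its two legal states are ``the $a$-edge points into the fanout and every $c_i$ points out of the fanout (hence into its clause)'' and the mirror image where every $c_i$ points out of its clause; the free terminal makes both states freely available. These are exactly $x$ being true (all its edges point into their clauses) and $x$ being false (none do). Connecting each output to a port of the 1-in-3 gadget for a clause, that gadget is satisfied exactly when one of its three literals is true. Hence satisfying assignments of the network correspond bijectively to 1-in-3-satisfying truth assignments, and the network is consistent iff the formula is. (A degree-$1$ variable can be wired straight from its free terminal to the clause, and a degree-$0$ variable dropped.)

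It remains to realize this network as a \emph{planar} network respecting each gadget's cyclic port order, and this is where planarity of the source is used. Starting from a planar embedding of the incidence graph (computable in polynomial time), I would replace each clause vertex by a 1-in-3 gadget and each variable vertex by its variable gadget. The 1-in-3 gadget is symmetric in $a,b,c$, so any rotation of the three clause edges is admissible. At a variable vertex of degree $k$, the fanout requires its $k$ outputs to occupy a contiguous arc with the $a$-edge set apart; since the free terminal is a newly added leaf, I may insert its edge into any gap of the rotation around the former variable vertex, leaving the $k$ clause edges contiguous, and then label the outputs to match their cyclic order. No crossings are created, so the embedding stays planar and every port order is respected, and the whole construction is clearly polynomial.

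The main obstacle is not the gadget semantics, which reduce to a short case check, but exactly this structural constraint forced by the missing crossover: the proof must rest on an already-planar variant of 1-in-3-SAT together with the bookkeeping that turns its planar embedding into a legal planar network with correct cyclic orders at every fanout. With Lemma~\ref{lem:pgo containment}(1) supplying membership in NP, NP-completeness follows.
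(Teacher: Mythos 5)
Your proposal is correct and takes essentially the same route as the paper: a reduction from planar positive 1-in-3SAT in which each variable becomes a free terminal feeding a $k$-way fanout, each clause becomes a 1-in-3 gadget, and the connections follow the planar incidence structure of the formula. The only difference is that you spell out the planar-embedding and cyclic-port-order bookkeeping that the paper leaves implicit, which is a sound elaboration rather than a different argument.
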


Some of these gadgets are new:
the \defn{1-in-3 gadget} has three ports,
and its constraint says that exactly one of them must point in.
The \defn{free terminal} has one port,
which is allowed to point in either direction.

\begin{proof}
  We reduce from planar positive 1-in-3SAT,
  which is NP-hard \cite{1in3}.
  Each variable with $k$ occurrences becomes
  a free terminal that leads to a $k$-way fanout gate.
  Each clause becomes a 1-in-3 gadget.
  We connect the outputs of the fanout gates
  to 1-in-3 gadgets as in the 1-in-3SAT formula,
  which is planar.
  Satisfying assignments of this network
  correspond to satisfying assignments of the formula.
\end{proof}

\begin{theorem}
PGO promise-inference with
free terminals,
fanout gates,
crossovers,
OR gates,
and AND gates
is coNP-hard.
\end{theorem}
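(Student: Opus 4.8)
The plan is to reduce from the complement of Boolean satisfiability, $\overline{\mathrm{SAT}}$, which is coNP-complete. Given a CNF formula $\phi$ over variables $x_1,\dots,x_n$, I would build a planar network $N_\phi$ that \emph{computes} $\phi$ in the gate sense of Section~\ref{sec:pgo gates}, so that one designated ``output'' edge carries the value of $\phi$ in every satisfying assignment. Variables become free terminals (one free bit each); repeated occurrences are produced with fanout gates; the logic of $\phi$ is assembled from OR and AND gates; and crossovers resolve the crossings of the resulting planar circuit. The one nonobvious ingredient is negation, since OR, AND, and fanout are all monotone: I would obtain it from the orientation convention itself, namely that joining two output ports (or two input ports) by a single edge forces the two values to be complementary. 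This realizes a NOT and lets me route both polarities of each variable (a dual-rail encoding in which De~Morgan pushes all negations to the leaves). The output edge is attached to a free terminal, so nothing in its immediate neighborhood pins it down.

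With this circuit in hand, the output edge carries $\phi(x_1,\dots,x_n)$ and is therefore forced exactly when $\phi$ is constant. After preprocessing so that $\phi$ is never a tautology (any nonempty CNF has a falsifying assignment, so we may assume this), ``the output is forced'' becomes equivalent to ``$\phi$ is unsatisfiable.'' Since the output edge is locally free, this forced edge is an inference in the sense of the first option of the promise, and satisfying assignments still exist (one per input setting, all with output false), so the option holds non-vacuously. Conversely, if $\phi$ is satisfiable it takes both truth values, so the output edge is not forced, and I must certify that \emph{no} edge is forced beyond what the semilocal constraints already reveal. I would set the semilocal constraint of each gate to be exactly the set of incoming-edge patterns attainable in some fixed-size neighborhood of that gate; the first part of the promise (every satisfying assignment respects the semilocal constraints) is then immediate, since semilocal constraints over-approximate the globally achievable patterns.

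The crux, and the step I expect to fight with, is the second part of the promise in the satisfiable case: for every gate other than the output gate, every pattern in its semilocal constraint must be globally realizable by some satisfying assignment. This is precisely where the earlier proofs \cite{inf,indeed} break, because a combination of wire values that looks locally legal can be globally impossible---exactly the ``$u$ and $v$ can't both be true'' phenomenon. A naive realization of $\phi$ as an AND-tree over clause values fails here: at a gate combining a partial conjunction with a clause, a locally-allowed pattern such as ``both inputs false'' can be globally unachievable even when $\phi$ is satisfiable, which would violate the promise and make the instance ill-formed. The real work is therefore to engineer the network so that the \emph{only} non-local forcing anywhere in $N_\phi$ is the output edge in the unsatisfiable case: at every other gate, local achievability must coincide with global achievability whenever $\phi$ is satisfiable. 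The plan is to arrange this by keeping all intermediate wires decoupled---so that each gate's semilocal pattern extends to a full satisfying assignment by independently adjusting free terminals---and the technical heart of the argument is verifying this decoupling property gate by gate, with the single deliberate exception at the output. Granting that verification, $N_\phi$ is a well-formed promise-inference instance that has an inference iff $\phi$ is unsatisfiable, giving coNP-hardness; combined with Lemma~\ref{lem:pgo containment}, completeness follows.
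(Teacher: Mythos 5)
You have correctly located the crux---the second option of the promise in the satisfiable case---but your proposal does not actually resolve it; it ends with ``granting that verification,'' and ``keeping all intermediate wires decoupled'' is an aspiration, not a construction. Concretely, reducing from general CNF the failure you yourself describe is unavoidable: at an AND gate merging the clause values of $(x\lor y)$ and $(\lnot x\lor\lnot y)$, the pattern ``both inputs false'' is in the gate's constraint (and in any constant-size-neighborhood semilocal constraint, since locally the two inputs look independent), yet it is globally unachievable even though $\phi$ is satisfiable---so the instance violates the promise. Your dual-rail encoding does not help here, because the two rails of a variable are anticorrelated by construction, which is exactly the kind of long-range correlation between wires that poisons the AND tree. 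Defining semilocal constraints by ``attainability in a fixed-size neighborhood'' cannot rescue this either: any neighborhood large enough to exclude such patterns ceases to be constant-size, and with constant-size neighborhoods the bad pattern stays in the semilocal constraint.

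The paper closes this gap with three concrete devices you would need (or equivalents). First, it reduces from the complement of \emph{monotone} 3SAT, splitting the network into a positive-clause half and a negative-clause half, with each variable edge shared between a fanout in each half so that the orientation convention supplies the negation---the same output-to-output trick you propose, but organized so that each half is monotone in the variables. Monotonicity is what makes the AND chains verifiable: setting all variables false (resp.\ true) falsifies \emph{every} clause in a half simultaneously, giving ``both inputs false'' at every AND gate, and after removing duplicate clauses, falsifying the three variables of any single clause while setting all others true falsifies exactly that one clause in its half, giving ``exactly one input false.'' Second, duplicate-clause removal is itself load-bearing in that argument. Third, the two halves are still functions of the same variables, so the \emph{final} AND gate's inputs are not independently controllable by variables alone; the paper splices auxiliary free edges $z_1$ and $z_2$ into each half so that either input of the final gate can be flipped at will when $\phi$ is satisfiable. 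With these in place the semilocal constraints can simply be the full constraints and the verification is a short case check per gate type; without them, your reduction produces instances outside the promise, and the proof does not go through.
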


\begin{proof}

    \begin{figure}
        \centering
        \newcommand\sep{2.5}
        \newcommand\xsh{0.7}
        \newcommand\ysh{1.8}
        \newcommand\vsh{0.5}
        \newcommand\ns{5.5}

        \begin{circuitikz}
            \foreach \side/\mul/\porta/\portb/\pos in {n/-1/1/2/above, s/1/2/1/below} {

                \draw (\sep*3+\xsh*0.5,{-(\ns-\ysh*0.5)*\mul}) node {\(\cdots\)};
                \draw (\sep*3+\xsh*0.5,{-(\vsh)*\mul}) node {\(\cdots\)};

                \foreach \n in {1,2,4} {
                    \draw[shift={(\sep*\n,-\ns*\mul)},yscale=\mul]
                    % clause
                    (\xsh,0) node[or port, rotate=-90*\mul] (\side or2\n) {}
                    (0,\ysh) node[or port, rotate=-90*\mul] (\side or1\n) {}
                    (\side or1\n.out) |- (\side or2\n.in \porta)
                    % variable
                    (\xsh*0.5,\ns-\vsh) node[
                    muxdemux,
                    muxdemux def={NL=4, NB=0, Lh=3, Rh=2, w=0.4, inset Lh=1},
                    rotate=90*\mul
                    ] (\side var\n) {}
                    node[yshift=-8.5*\mul] {\(\cdots\)}
                    ;
                }

                % and together all the ors, add a free variable
                \draw
                    (\sep*3,{-(\ns+\ysh)*\mul}) node[and port] (\side and1) {}
                    (\sep*5,{-(\ns+\ysh)*\mul}) node[and port] (\side and2) {}
                    (\sep*5.5,{-(\ns-\ysh)*\mul}) node[and port, rotate=90*\mul] (\side and3) {}
                    ;

                % wire everything up
                \draw
                    (\side or24.out) |- (\side and2.in \portb)
                    (\side and1.out) -- ++(0.5,0) ++(0.5,0) node{\(\cdots\)} ++(0.5,0) -- ++(0.5,0) |- (\side and2.in \porta)
                    (\side or22.out) |- (\side and1.in \portb)
                    (\side or21.out) |- (\side and1.in \porta)
                    (\side and2.out) -| (\side and3.in \portb)
                    (\side and3.in \porta) node[\pos] {\(z_\porta\)}
                    ;

            }

            % connect variable sides
            \foreach \n in {1,2,4} \draw (nvar\n.rpin 1) -- (svar\n.rpin 1);

            % final output
            \draw (\sep*6,0) node[and port] (and) {};
            \draw (nand3.out) |- (and.in 1) (sand3.out) |- (and.in 2) (and.out) node[right] {\(r\)};

            % example wires
            \draw
                (svar1.lpin 4 -| sor12.in 1) ++(0,-1.2) node[jump crossing] (jc) {}
                (svar1.lpin 4) |- (jc.west) (jc.east) -| (sor14.in 2)
                (svar4.lpin 1) -- ++(0,-0.75) -| (jc.north) (jc.south) -- (sor12.in 1)
                (nvar2.lpin 1) -- ++(0,1) -| (nor14.in 1)
                (nvar2.lpin 3) -- ++(0,1) -| (nor12.in 2)
                ;

            % example undrawn lines
            \draw[densely dashed]
                (nor11.in 1) -- ++(0,-0.5) node(bottom){}
                (nor11.in 2) -- ++(0,-0.5)
                (nor21.in 2) -- (bottom -| nor21.in 2)
                (nor12.in 1) -- ++(0,-0.5)
                (nor22.in 2) -- (bottom -| nor22.in 2)
                (nor14.in 2) -- ++(0,-0.5)
                (nor24.in 2) -- (bottom -| nor24.in 2)
                (sor11.in 2) -- ++(0,0.5) node(top){}
                (sor11.in 1) -- ++(0,0.5)
                (sor21.in 1) -- (top -| sor21.in 1)
                (sor12.in 2) -- ++(0,0.5)
                (sor22.in 1) -- (top -| sor22.in 1)
                (sor14.in 1) -- ++(0,0.5)
                (sor24.in 1) -- (top -| sor24.in 1)
                (nvar4.lpin 1) -- ++(0,0.5)
                (nvar4.lpin 2) -- ++(0,0.5)
                (nvar4.lpin 3) -- ++(0,0.5)
                (nvar4.lpin 4) -- ++(0,0.5)
                (svar4.lpin 2) -- ++(0,-0.5)
                (svar4.lpin 3) -- ++(0,-0.5)
                (svar4.lpin 4) -- ++(0,-0.5)
                (nvar2.lpin 2) -- ++(0,0.5)
                (nvar2.lpin 4) -- ++(0,0.5)
                (svar2.lpin 1) -- ++(0,-0.5)
                (svar2.lpin 2) -- ++(0,-0.5)
                (svar2.lpin 3) -- ++(0,-0.5)
                (svar2.lpin 4) -- ++(0,-0.5)
                (nvar1.lpin 1) -- ++(0,0.5)
                (nvar1.lpin 2) -- ++(0,0.5)
                (nvar1.lpin 3) -- ++(0,0.5)
                (nvar1.lpin 4) -- ++(0,0.5)
                (svar1.lpin 1) -- ++(0,-0.5)
                (svar1.lpin 2) -- ++(0,-0.5)
                (svar1.lpin 3) -- ++(0,-0.5)
                ;

            % example boxes
            \draw[draw=red] ($(nor11.north west)+(-0.3,0)$) rectangle ($(nor21.south east)+(0.3,0)$);
            \draw[draw=green] ($(nvar4.south west)+(-0.3,0.5)$) rectangle ($(svar4.south west)+(0.3,-0.5)$);
        \end{circuitikz}
        \caption{Our reduction for coNP-hardness of PGO promise-inference. See Table 1 for gadget notation.}
        \label{fig:pgo promise inference}
    \end{figure}
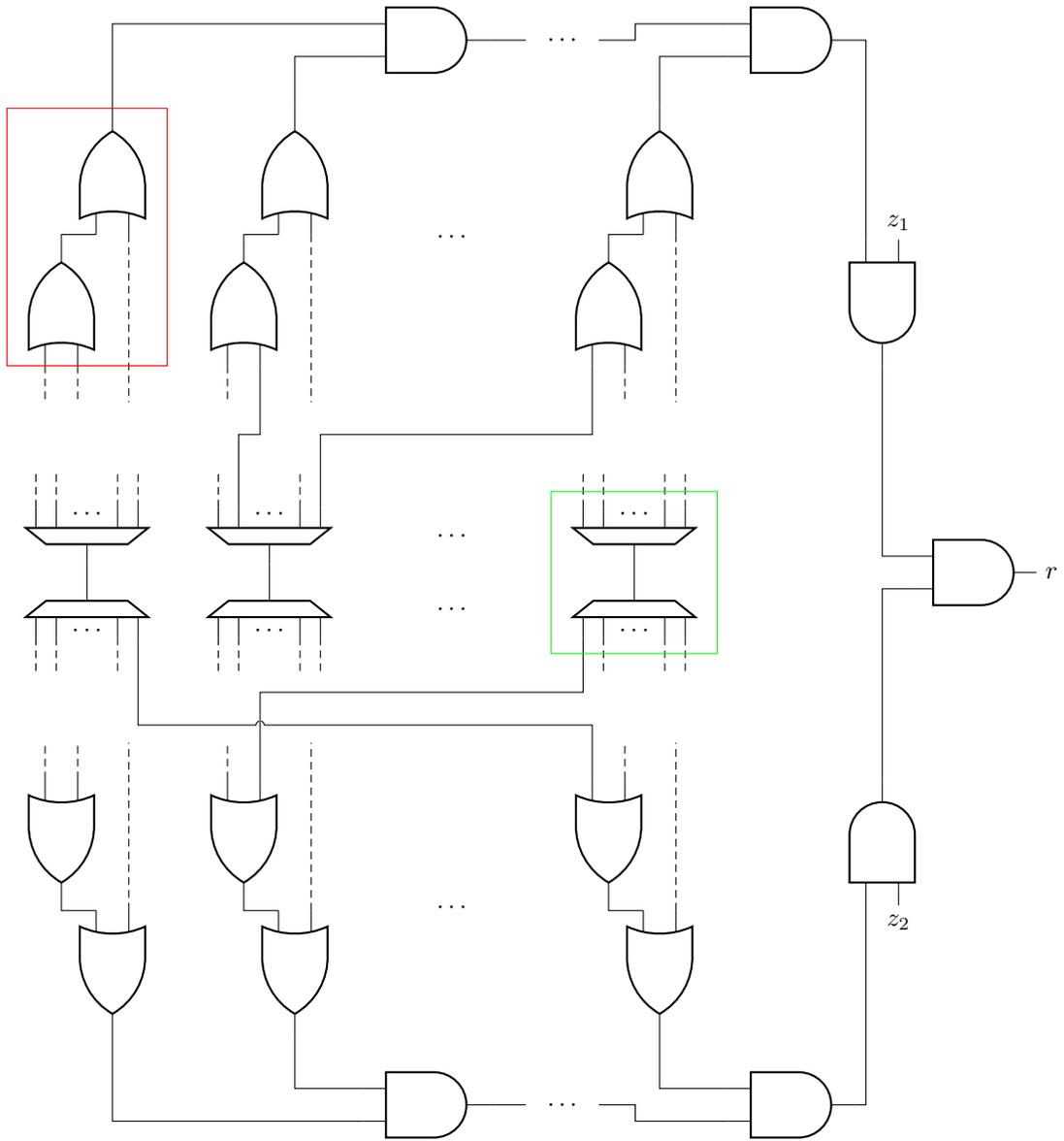

    We reduce from the complement of monotone 3SAT,
    which is NP-hard \cite{monotone3sat}.
    The layout of gadgets is depicted in Figure~\ref{fig:pgo promise inference}.
    The green outline is an example of a variable,
    and the red outline is an example of a clause.
    We think of the top half of the circuit as the positive clauses,
    and the bottom half of the circuit as the negative clauses.
    We remove duplicate clauses before constructing the circuit.
    Each gadget has as its semilocal constraint its full constraint set.

    The output of the formula is given by \(r\),
    which is connected to a free terminal.
    If the formula is unsatisfiable,
    there is an inference;
    namely,
    \(r\) is false
    (points towards the AND gate).
    So we just need to show that if the formula is satisfiable,
    there is no inference; that is,
    for each gadget in the reduction,
    every set of edges in its constraint
    is possible to achieve (pointing in) in some consistent assignment.
    Because each gadget is a gate,
    this is equivalent to every combination of input values being attainable.

    There is a unique consistent assignment
    for each choice of values for variables, $z_1$, and $z_2$.
    We will describe such a choice that achieves each combination of input values
    for each gadget.

    Each fanout depends on a single variable, which can have either value.
    Each crossover is between two wires connected to different variables,
    so all combinations are possible.
    The inputs of each OR gate in a clause can be chosen by setting the input variables as desired.

    Consider any of the AND gates combining the outputs of the clauses on one half of the circuit,
    and assume for simplicity that it's on the positive half.
    We can make both inputs true (pointing in) or both inputs false (pointing out)
    by setting all variables true
    or all variables false, respectively.
    To make exactly one input true,
    note that we can choose a single (positive) clause
    to be unsatisfied:
    make the three variables in that clause false and all other variables true.
    Then the clause is unsatisfied, but (since clauses have exactly three literals and
    there aren't duplicates), every other positive clause is satisfied.
    By choosing a clause that feeds into either side of the AND gate in question,
    we can make either of its inputs false and the other true.
    The same argument shows that
    the AND gates with $z_1$ and $z_2$ can also have any combination of inputs.

    Finally,
    consider the rightmost AND gate, with output $r$.
    By hypothesis,
    the formula is satisfiable,
    so both inputs can be made true
    by satisfying the formula and setting \(z_1\) and \(z_2\) to both be true.
    We can then independently choose to make either input false
    by flipping \(z_1\) or \(z_2\) as appropriate.
\end{proof}

\begin{theorem}
PGO uniqueness with
free terminals,
fanout gates,
crossovers,
NOT gates,
OR gates,
AND gates,
and fixed terminals
is coNP-hard.
\end{theorem}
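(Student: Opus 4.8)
The plan is to reduce from the unsatisfiability problem, which is coNP-complete, building from a formula a network whose supplied assignment is unique exactly when the formula is unsatisfiable. As with the previous two theorems, membership is handed to us by Lemma~\ref{lem:pgo containment}, so only hardness remains to prove.

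First I would invoke the circuit-to-network translation of Section~\ref{sec:pgo gates}: any Boolean circuit over $\{\wedge,\vee,\neg\}$ becomes a planar network of gate gadgets, using fanout gates wherever a value is reused, crossovers wherever two wires cross, NOT, OR, and AND gates for the logic, and I would cap each dangling input with a free terminal (so each variable is genuinely free) and cap the single dangling output with a fixed terminal forcing it to be true. The key structural fact, already established there, is that the network has exactly one satisfying assignment for each combination of orientations of the dangling input edges, with the output edges encoding the circuit's output. With the output pinned by a terminal, the satisfying assignments of the capped network are therefore in bijection with the variable assignments on which the circuit evaluates to true.

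The subtlety is that PGO uniqueness hands us a satisfying assignment as part of the input, so the network must always have at least one; a naive ``force $\phi$ true'' construction fails precisely in the unsatisfiable case we care about, since then there would be no assignment to supply. To repair the direction, given a formula $\psi(x_1,\dots,x_n)$ I would introduce a fresh variable $w$ and build the circuit computing $\phi = \bigl(w \wedge \bigwedge_{i} \neg x_i\bigr) \vee \bigl(\neg w \wedge \psi\bigr)$. The assignment setting $w$ true and every $x_i$ false always satisfies $\phi$ through its first disjunct, and I would supply the corresponding network assignment (computable in polynomial time by propagating through the circuit) as the given satisfying assignment. Any \emph{other} satisfying assignment of $\phi$ must set $w$ false and hence satisfy $\psi$, so $\phi$'s canonical solution is unique if and only if $\psi$ is unsatisfiable; by the bijection above, the supplied network assignment is unique if and only if $\psi$ is unsatisfiable, completing the reduction.

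The main obstacle is the bookkeeping needed to justify the bijection rather than the high-level idea: I would need to check that fanout gates faithfully duplicate a value, that crossovers transmit two independent signals past one another (so planarizing the circuit by replacing each wire crossing with a crossover neither creates nor destroys assignments), and that capping the dangling inputs and output with terminals interacts correctly with the ``one assignment per input combination'' property. Once these routine verifications are in hand, planarity of the construction is immediate from drawing the circuit in the plane, and correctness of the reduction follows directly from the canonical-solution analysis above.
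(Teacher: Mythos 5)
Your proposal is correct and matches the paper's proof in essence: the paper also reduces from the complement of 3SAT by building a network computing $\bigl(z \wedge \bigwedge_i \neg x_i\bigr) \vee \bigl(\neg z \wedge \psi\bigr)$ capped by a fixed true terminal, where your fresh variable $w$ is exactly the paper's edge $z$ (which the paper routes directly into the two AND gates so that its two orientations serve as $w$ and $\neg w$ without an explicit NOT), and the supplied assignment is likewise the all-variables-false, $w$-true solution. The bijection you invoke between satisfying assignments and input orientations is precisely the parsimony observation the paper states (``the orientations of all edges are uniquely determined by those of $z$ and variables''), so only the routine gate-gadget verifications you flag remain, as in the paper.
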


The \defn{fixed terminal} has one port,
and its constraint forces the port's value.
There are two kinds of fixed terminal;
we will use the \defn{fixed true terminal},
which forces the edge to point in.

\begin{proof}

  \begin{figure}
    \centering
    \newcommand\sep{2.5}
    \newcommand\xsh{0.7}
    \newcommand\ysh{1.8}
    \newcommand\vsh{0.5}
    \newcommand\ns{5.5}

    \begin{circuitikz}
      \foreach \side/\mul/\porta/\portb/\pos in {s/1/2/1/below} {

        \draw (\sep*3+\xsh*0.5,{-(\ns-\ysh*0.5)*\mul}) node {\(\cdots\)};
        \draw (\sep*3+\xsh*0.5,{-(\vsh)*\mul}) node {\(\cdots\)};

        \foreach \n/\v in {1/1,2/2,4/n} {
          \draw[shift={(\sep*\n,-\ns*\mul)},yscale=\mul]
          % clause
          (\xsh,0) node[or port, rotate=-90*\mul] (\side or2\n) {}
          (0,\ysh) node[or port, rotate=-90*\mul] (\side or1\n) {}
          (\side or1\n.out) |- (\side or2\n.in \porta)
          % variable
          (\xsh*0.5,\ns-\vsh) node[
          muxdemux,
          muxdemux def={NL=4, NB=0, Lh=3, Rh=2, w=0.4, inset Lh=1},
          rotate=90*\mul
          ] (\side var\n) {}
          node[yshift=-8.5*\mul] {\(\cdots\)}
          (\side var\n.rpin 1) node[above left] {\(x_\v\)}
          ;
        }

        % and together all the ors, add a free variable
        \draw
        (\sep*3,{-(\ns+\ysh)*\mul}) node[and port] (\side and1) {}
        (\sep*5,{-(\ns+\ysh)*\mul}) node[and port] (\side and2) {}
        (\sep*5.5,{-(\ns-\ysh)*\mul}) node[and port, rotate=90*\mul] (\side and3) {}
        (\sep*3,{-(\ns+\ysh)*\mul+10.5}) node[and port] (\side and1v) {}
        (\sep*5,{-(\ns+\ysh)*\mul+10.5}) node[and port] (\side and2v) {}
        (\sep*5.5,{-(\ns-\ysh)*\mul+5}) node[and port, rotate=-90*\mul] (\side and3v) {}
        ;

        % wire everything up
        \draw
        (\side or24.out) |- (\side and2.in \portb)
        (\side and1.out) -- ++(0.5,0) ++(0.5,0) node{\(\cdots\)} ++(0.5,0) -- ++(0.5,0) |- (\side and2.in \porta)
        (\side and1v.out) -- ++(0.5,0) ++(0.5,0) node{\(\cdots\)} ++(0.5,0) -- ++(0.5,0) |- (\side and2v.in \portb)
        (\side or22.out) |- (\side and1.in \portb)
        (\side or21.out) |- (\side and1.in \porta)
        (\side and2.out) -| (\side and3.in \porta)
        (\side and2v.out) -| (\side and3v.in \portb)
        ;

      }

      \foreach \v/\g/\p in {1/1/1,2/1/2,4/2/2} {
        \draw (svar\v.rpin 1) |- (sand\g v.in \p);
      }
      \draw (sand3v.in 2) -- ++(-0.8,0) node(tmp){} -- (sand3.in 1 -| tmp) node[midway,left] {\(z\)} -- (sand3.in 1);

      % final output
      \draw (\sep*6,0) node[or port] (and) {};
      \draw (sand3v.out) |- (and.in 1) (sand3.out) |- (and.in 2) (and.out) node[right] {\(T\)};

      % example wires
      \draw
      (svar1.lpin 4 -| sor12.in 1) ++(0,-1.25) node[jump crossing] (jc) {}
      (svar1.lpin 4) |- (jc.west) (jc.east) -| (sor14.in 2)
      (svar4.lpin 2) ++(0,-0.65) node(notin){}
      ($(jc.north |- notin)!0.5!(notin)$) node(not)[not port,rotate=180] {}
      (svar4.lpin 2) |- (not.in)
      (not.out) -| (jc.north) (jc.south) -- (sor12.in 1)
      ;

      % example undrawn lines
      \draw[densely dashed]
      (sor11.in 2) -- ++(0,0.5) node(top){}
      (sor11.in 1) -- ++(0,0.5)
      (sor21.in 1) -- (top -| sor21.in 1)
      (sor12.in 2) -- ++(0,0.5)
      (sor22.in 1) -- (top -| sor22.in 1)
      (sor14.in 1) -- ++(0,0.5)
      (sor24.in 1) -- (top -| sor24.in 1)
      (svar4.lpin 1) -- ++(0,-0.5)
      (svar4.lpin 3) -- ++(0,-0.5)
      (svar4.lpin 4) -- ++(0,-0.5)
      (svar2.lpin 1) -- ++(0,-0.5)
      (svar2.lpin 2) -- ++(0,-0.5)
      (svar2.lpin 3) -- ++(0,-0.5)
      (svar2.lpin 4) -- ++(0,-0.5)
      (svar1.lpin 1) -- ++(0,-0.5)
      (svar1.lpin 2) -- ++(0,-0.5)
      (svar1.lpin 3) -- ++(0,-0.5)
      ;
    \end{circuitikz}
    \caption{
      Our reduction for coNP-hardness of PGO uniqueness.
      See Table~\ref{tbl:gadgets} for gadget notation.
    }
    \label{fig:pgo uniqueness}
  \end{figure}
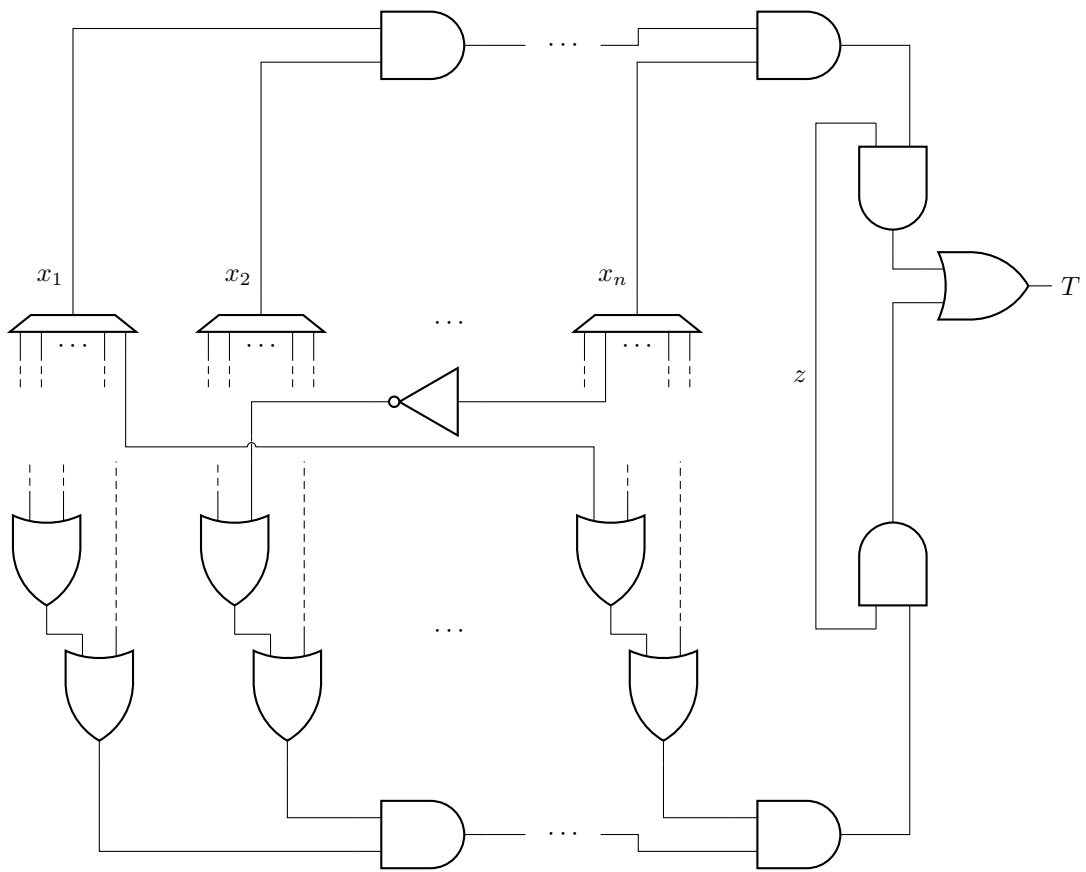

  We reduce from the complement of 3SAT,
  which is NP-hard \cite{sat}.
  Refer to Figure~\ref{fig:pgo uniqueness}.
  Each variable becomes an edge connected to a fanout gate pointing down:
  that edge pointing down represents the variable being true.
  Each clause becomes a pair of OR gates
  connected in the natural way.
  We place edges connecting variables to clauses
  in the structure of the formula.
  If edges cross, we use a crossover gadget.
  For negated literals, we place a NOT gate along the edge.

  The outputs of the clauses are merged with AND gates,
  so the edge in the bottom right of Figure~\ref{fig:pgo uniqueness}
  points right (and up) exactly when the assignment
  (based on the orientations of edges representing variables)
  is satisfying.

  In the top section of Figure~\ref{fig:pgo uniqueness},
  we merge the other ends of the variable edges with AND gates.
  The top right edge points right (and down)
  exactly when all variables are false.

  On the right, there is an edge $z$ which points towards one of two AND gates,
  with the results merged by an OR gate and then run into a fixed true terminal.
  For the output of that OR gate to be true (point right),
  either
  \begin{itemize}
    \item $z$ points up, and all variables are false; or
    \item $z$ points down, and the 3SAT formula is satisfied.
  \end{itemize}

  Note that the orientations of all edges
  are uniquely determined by those of $z$ and variables,
  even ignoring the fixed terminal.
  In particular, the network has exactly one satisfying assignment
  (with $z$ pointing down) for each satisfying assignment of the 3SAT formula,
  plus exactly one more,
  which has $z$ pointing up and all variable edges pointing up.

  The input to PGO uniqueness is the network described above
  and the satisfying assignment with $z$ pointing up.
  This is the unique satisfying assignment exactly
  when the 3SAT formula is not satisfiable.
\end{proof}

\subsection{Simulation}\label{sec:pgo simulation}

A key feature of our framework is that it allows us to abstractly construct gadgets out of other gadgets, greatly reducing the complexity of the gadgets we need to actually implement in Minesweeper.
In particular, the results in Section~\ref{sec:pgo hardness}
use many gadgets, some of which are hard to build directly,
particularly with the properties our hardness proofs require.

\begin{definition}
A \defn{simulation} using gadgets from $S$
is a network of gadgets from $S$,
except it may have some `dangling' edges incident to only one vertex.
Equivalently, the graph contains one special vertex called the `outside world' (which has the trivial constraint).

For \defn{planar} simulations,
we require that the dangling edges are in the external face,
or equivalently that the graph including the outside world is planar.
\end{definition}

See Section~\ref{sec:pgo simplify}
for several examples of simulations.

\begin{definition}
Given a simulation, the \defn{simulated gadget}
has dangling edges as ports,
and its constraint contains each set of dangling edges
for which there is a satisfying assignment
making exactly those edges point into the simulation
(away from the outside world).

In the planar case, the order of the ports is the order of the dangling edges around the simulation, or the reverse of their order around the outside world.
\end{definition}

\begin{definition}
We say $S$ \defn{simulates} $G$ if
there is a simulation using gadgets from $S$
where the simulated gadget is $G$.
For a set $T$, we say that $S$ \defn{simulates} $T$
if $S$ simulates each gadget in $T$.
\end{definition}

For PGO uniqueness and Minesweeper solvability,
we will need our simulations to be even better behaved.

\begin{definition}
A simulation of $G$ is \defn{parsimonious}
if for each legal configuration of the edges of $G$,
there is exactly one satisfying assignment of the simulation
that orients the dangling edges that way.

We say that $S$ \defn{parsimoniously} simulates $G$
if the relevant simulation is parsimonious,
and $S$ \defn{parsimoniously} simulates $T$
if $S$ parsimoniously simulates each element of $T$.
\end{definition}

Much of the point of having a theory of simulations
is that they can be composed.
This reduces conceptual complexity by letting us
break down complicated simulations into a sequence of simpler ones.

\begin{lemma}\label{lem:compose sim}
If $S$ simulates $T$ and $T$ simulates $G$, then $S$ simulates $G$.
Moreover, this composition preserves parsimony.
\end{lemma}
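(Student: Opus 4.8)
The plan is to prove the lemma by \emph{substitution}: build a single $S$-network realizing $G$ by plugging an $S$-simulation of each $T$-gadget into a $T$-simulation of $G$. First I would fix the data. Let $N$ be a simulation of $G$ using gadgets from $T$, and for each gadget $H\in T$ let $\sigma_H$ be a simulation of $H$ using gadgets from $S$; both exist by hypothesis (recall ``$S$ simulates $T$'' means $S$ simulates each element of $T$). Construct the composed simulation $N'$ as follows: for every vertex $v$ of $N$, labeled by some $H\in T$, delete $v$ and insert a fresh copy of $\sigma_H$, identifying the dangling edge of $\sigma_H$ corresponding to port $p$ of $H$ with the edge incidence at $v$ labeled $p$. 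Each internal edge of $N$ (joining a port of one $T$-gadget to a port of another) thereby becomes an edge of $N'$ joining the inner endpoints of two dangling edges, while the dangling edges of $N$ are left dangling in $N'$. Since every vertex of $N'$ is an $S$-gadget, $N'$ is a network of gadgets from $S$ with the same dangling edges as $N$.

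Then I would check that $N'$ is a valid \emph{planar} simulation. For planarity, blow up each vertex $v$ of the planar embedding of $N$ into a small disk and drop the planar embedding of $\sigma_H$ into it: the dangling edges of $\sigma_H$ occur around its outer face in the cyclic order of $H$'s ports, and the edges of $N$ around $v$ occur in that same order, so the two orders agree and the splice is planar. The dangling edges of $N'$ are exactly those of $N$, in the same cyclic order, which is the port order of $G$. The orientation bookkeeping is clean because a dangling edge of $\sigma_H$ points \emph{into} $\sigma_H$ exactly when the corresponding edge of $N$ points into $v$; thus merging two dangling edges across an internal edge of $N$ is consistent, one orientation of the merged edge meaning ``into one copy'' and simultaneously ``out of the other.''

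Next comes the core step: matching satisfying assignments level by level. Given a satisfying assignment of $N'$, its restriction to each spliced copy of $\sigma_H$ is a satisfying assignment of $\sigma_H$, so the edges pointing into $v$ form a legal configuration of $H$ and $v$ is satisfied; hence the induced orientation on the edges inherited from $N$ is a satisfying assignment of $N$. Conversely, given a satisfying assignment of $N$, each vertex $v$ sees a legal configuration of $H$, so by the definition of simulation there is a satisfying assignment of $\sigma_H$ inducing it; assembling these over all $v$ (they agree on merged edges by the orientation convention, and use disjoint internal edges) yields a satisfying assignment of $N'$ with the same orientation on the dangling edges. Since the dangling edges of $N'$ and $N$ coincide, the two simulations realize exactly the same configurations, so the simulated gadget of $N'$ has the same port set, cyclic order, and constraint as $G$; therefore $S$ simulates $G$.

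Finally, for parsimony I would assume both simulations are parsimonious. Given a legal configuration $c$ of $G$, parsimony of $N$ gives a \emph{unique} satisfying assignment $A$ of $N$ inducing $c$; then $A$ fixes the configuration seen at each $v$, and parsimony of $\sigma_H$ gives a unique satisfying assignment of each spliced copy inducing that configuration. These assemble into one satisfying assignment of $N'$ inducing $c$, and any satisfying assignment of $N'$ inducing $c$ must restrict to $A$ (by uniqueness at the $N$ level) and then to the unique local assignment in each copy (by uniqueness at the $\sigma_H$ level), so it is the only one. Hence $N'$ is parsimonious. I expect the main obstacle to be the careful bookkeeping of the splice itself—matching ports to dangling edges, reconciling the two orientation conventions across merged edges, and verifying that the cyclic orders line up for planarity—rather than any conceptual difficulty; once the substitution is set up correctly, both the assignment bijection and the parsimony count follow immediately.
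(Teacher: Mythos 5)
Your proposal is correct and follows essentially the same argument as the paper: substitute an $S$-simulation for each $T$-gadget in the $T$-simulation of $G$, establish the two-way correspondence between satisfying assignments (restriction to inter-simulation edges in one direction, filling in each component simulation in the other), and derive parsimony from uniqueness at both levels. The only difference is that you spell out the planarity and orientation bookkeeping that the paper leaves implicit, which is a matter of detail rather than of approach.
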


\begin{proof}
In the simulation of $G$ using gadgets from $T$,
replace each gadget with its simulation using gadgets from $S$.
In any satisfying assignment of the new simulation,
each component simulation is consistent with the gadget it's simulating,
so we can construct a satisfying assignment of the original simulation
with the same orientations for dangling edges
by looking at only the edges between simulated gadgets.

Conversely, any satisfying assignment of the original simulation
can be extended to one of the new simulation
by filling in each simulated gadget with an appropriate satisfying assignment.
Thus we have a simulation of $G$ using gadgets from $S$.

If each gadget is replaced with a parsimonious simulation,
there is only one way to fill in simulated gadgets this way.
So we have defined a bijection between satisfying assignments
of the original and new simulations of $G$.
If the original is parsimonious, so is the new simulation.
\end{proof}

The other half of the point of simulations is to simplify hardness proofs,
so we need them to preserve hardness.
This is straightforward for satisfiability.

\begin{lemma}
If $S$ simulates $T$,
then there is a polynomial-time reduction
from PGO satisfiability with $T$
to PGO satisfiability with $S$.
\end{lemma}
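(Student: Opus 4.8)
The plan is to build the target network $N'$ from the input network $N$ by substituting, for each vertex of $N$, a fixed $S$-simulation of its gadget, and then to check that this substitution (i) stays planar, (ii) runs in polynomial time, and (iii) preserves satisfiability. Since $S$ simulates $T$ and $T$ is finite, I would first fix once and for all, for each gadget $G\in T$, a planar simulation $\sigma_G$ using gadgets from $S$ whose simulated gadget is $G$; each $\sigma_G$ has constant size, and by the definition of the simulated gadget in the planar case, its dangling edges appear in the external face in the same cyclic order as the ports of $G$.

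Given a planar network $N$ over $T$, the construction replaces each vertex $v$ labeled $G$ by a copy of $\sigma_G$. The $k$ dangling edges of $\sigma_G$ are in bijection, respecting cyclic order, with the $k$ ports of $G$, hence with the $k$ edge incidences at $v$; I would glue each dangling edge of the copy to the corresponding half-edge of $N$ incident to $v$, so that an edge of $N$ joining the ports of two gadgets becomes a single edge of $N'$ joining the matching dangling edges of the two pasted simulations. The result $N'$ is a network of gadgets from $S$.

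The step I expect to be the main obstacle, and the essential content in the planar setting, is verifying that $N'$ admits a planar embedding extending this combinatorial substitution. Here I would use the matching of cyclic orders directly: in the planar embedding of $N$, excise a small open disk $D_v$ around each vertex $v$; its boundary meets the edges at $v$ in the cyclic order of the ports of $G$. The planar simulation $\sigma_G$ can be drawn inside a closed disk with its dangling edges reaching the boundary in that same cyclic order, which is exactly what ``the dangling edges lie in the external face, in port order'' provides. Pasting the drawing of $\sigma_G$ into $D_v$ therefore matches each dangling edge to the incident edge of $N$ without crossings, and performing this simultaneously for every $v$ (the disks being disjoint) yields a planar embedding of $N'$ whose cyclic order around each $S$-gadget vertex is the one prescribed by that gadget. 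Thus $N'$ is a legitimate planar network.

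Finally, correctness and efficiency should follow quickly. For efficiency, because each $\sigma_G$ has constant size, $N'$ has $O(|N|)$ vertices and edges and is produced in polynomial (indeed linear) time. For correctness I would invoke the same back-and-forth argument used in the proof of Lemma~\ref{lem:compose sim}, specialized to the case where the outer network $N$ has no dangling edges of its own: given a satisfying assignment of $N'$, each pasted copy of $\sigma_G$ orients its dangling edges in a way lying in the constraint of $G$ (by the definition of the simulated gadget), so reading off only the inter-simulation edges yields a satisfying assignment of $N$; conversely, any satisfying assignment of $N$ orients the edges at each $v$ within the constraint of $G$, hence extends to a satisfying internal assignment of the corresponding $\sigma_G$, and these extensions together satisfy $N'$. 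Therefore $N'$ is satisfiable exactly when $N$ is, which completes the reduction.
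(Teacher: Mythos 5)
Your proposal is correct and takes essentially the same approach as the paper's proof: replace each vertex of $N$ by a constant-size $S$-simulation of its gadget, then transfer satisfying assignments in both directions (restrict to inter-simulation edges one way, extend into each simulation the other way). The only difference is that you explicitly verify the planar embedding via the disk-pasting argument and the running time, details the paper's terse proof leaves implicit but which follow exactly as you describe from the definition of planar simulation.
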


\begin{proof}
Given a network $N$ of gadgets from $T$,
replace each gadget with a (constant-size) simulation
using gadgets from $S$ to construct a network $N'$
of gadgets from $S$.

If there is a satisfying assignment of $N'$,
looking only at the edges connecting simulated gadgets
gives a satisfying assignment of $N$.

If there is a satisfying assignment of $N$,
we can set the edges connecting simulated gadgets to match it,
and then each simulated gadget has a local solution
compatible with those edges.
\end{proof}

This is somewhat more complicated for promise-inference,
but it's not so bad with the right definition for the decision problem.

\begin{lemma}
If $S$ simulates $T$,
then there is a polynomial-time reduction from
PGO promise-inference with $T$
to PGO promise-inference with $S$
\end{lemma}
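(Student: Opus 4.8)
The plan is to reuse the construction from the preceding satisfiability reduction: build $N'$ from $N$ by replacing each $T$-gadget $G$ with a fixed constant-size simulation $\sigma_G$ over $S$, gluing the dangling edges of adjacent simulations together so that the resulting \emph{interface edges} of $N'$ are in natural bijection with the edges of $N$, with corresponding orientations. The only genuinely new ingredient is the choice of semilocal constraints on $N'$. Writing $D_G$ for the given semilocal constraint of a vertex $G$ of $N$, I would assign to each internal gadget $v$ of $\sigma_G$ the semilocal constraint consisting of exactly those in-configurations of $v$ that occur in some satisfying assignment of $\sigma_G$ whose dangling orientations lie in $D_G$. This is a subset of $v$'s constraint, and since each $\sigma_G$ has constant size it is computable in constant time, so the reduction runs in linear time.

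The assignment correspondence comes for free from the previous lemma: restricting any satisfying assignment of $N'$ to the interface edges gives a satisfying assignment of $N$ with matching orientations, and conversely every satisfying assignment of $N$ extends to one of $N'$ by filling each $\sigma_G$ with a compatible local solution (which exists because the simulated gadget of $\sigma_G$ is exactly $G$). The first part of the promise for $N'$ then follows immediately: restrict a satisfying assignment of $N'$ to a satisfying assignment of $N$, invoke the first promise for $N$ to see that each $G$'s dangling configuration lies in $D_G$, and read off from the definition of the semilocal constraint of $v$ that $v$'s in-configuration lies in it.

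The heart of the proof is showing that the two-way dichotomy is preserved. If $N$ is in the second (`no inference') option, I would lift each witness: given an internal $v$ in $\sigma_G$ and a set $E$ in its semilocal constraint, choose a local solution $\ell$ of $\sigma_G$ realizing $E$ whose dangling configuration $d$ lies in $D_G$, take a satisfying assignment of $N$ realizing $d$ (which exists by $N$'s second option), and extend it to $N'$ using $\ell$ on $\sigma_G$; this realizes $E$ at $v$, so $N'$ is likewise in the second option and has no inference. If instead $N$ is in the first option, with a locally free edge $e$ whose orientation is inferred, the corresponding interface edge $e'$ is inferred by the assignment correspondence, and the remaining task is to check that $e'$ is still locally free in $N'$.

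That last check is the step I expect to be the main obstacle, and it is precisely what motivates the definition of the semilocal constraints above. Because $e$ is locally free, the semilocal constraint $D_{G_1}$ of one endpoint $G_1$ contains configurations both including and omitting the port at $e$; pushing these two configurations through local solutions of $\sigma_{G_1}$ shows that the endpoint of $e'$ interior to $\sigma_{G_1}$ has semilocal configurations both containing and omitting $e'$, and the same holds on the $\sigma_{G_2}$ side, so $e'$ is locally free. Hence $N'$ is in the first option exactly when $N$ is. Since the two options are mutually exclusive (as noted after the statement of the promise-inference problem), this shows $N'$ satisfies the promise and that the promise-inference answers for $N$ and $N'$ agree, completing the reduction; the locally-free check never uses the existence of a global satisfying assignment, so degenerate instances are handled uniformly.
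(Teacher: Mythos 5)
Your proposal is correct and matches the paper's proof essentially step for step: the same replacement construction for $N'$, the same definition of the internal semilocal constraints (in-configurations realized by local solutions of the simulation compatible with the host gadget's semilocal constraint), and the same three verifications (first promise part by restriction, preservation of the forced-but-locally-free edge using local solutions realizing both orientations, and lifting of witnesses in the no-inference case by combining a local solution with a global assignment of $N$). No meaningful differences to report.
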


\begin{proof}
We are given an instance of promise-inference with $T$,
which is a network $N$ of gadgets from $T$
and semilocal constraints.
Construct a network $N'$ of gadgets from $S$
in the same way as above.

To complete the instance of PGO promise-inference with $S$,
we must define semilocal constraints for the vertices in $N'$.
Consider a vertex $v$, which is inside a simulation of $G\in T$.
The semilocal constraint of $v$ shall contain
the sets the of edges that point into $v$
in satisfying assignments of the simulation that are compatible with
the semilocal constraint of $G$ (as a vertex of $N$).
These semilocal constraints can be computed in polynomial time
because each one requires considering only a simulation,
and simulations have constant size.

It remains to check that this instance satisfies the promise,
and falls into the same option as the input instance.
For the first part of the promise,
consider a vertex $v$ in $N'$ and a satisfying assignment.
Then $v$ is inside a simulation of some $G\in T$,
the corresponding assignment of $N$
(which has only edges between simulated gadgets)
is compatible with the semilocal constraint of $G$.
So we have a satisfying assignment
of the simulation of $G$ compatible with
the semilocal constraint of $G$,
and thus by construction it is
compatible with the semilocal constraint of $v$.

Suppose $N$ has a locally free edge whose orientation is forced.
Since every satisfying assignment of $N'$
contains a satisfying assignment of $N$
in the inter-simulation edges,
the corresponding edge of $N'$ also has forced orientation.
It must also be locally unforced:
each orientation is compatible with the semilocal constraints
of its vertices in $N$,
and therefore each orientation is compatible with
some appropriate satisfying assignment of
the simulations of its vertices.
Hence $N'$ also has a locally free edge with forced orientation.

Now suppose $N$ falls into the second option,
namely there are satisfying assignments achieving
every element of its semilocal constraints.
Consider a vertex $v$ in $N'$ which is inside a simulation of $G$,
and consider a set $L$ in the semilocal constraint of $v$.
By definition, there is a satisfying assignment
of the simulation which makes exactly the edges in $L$ point towards $v$,
and which is compatible with the semilocal constraint of $G$
as a vertex of $N$.
By assumption, there is a satisfying assignment of $N$
which orients the dangling edges of the simulation in the same way.
Combining these, and filling in the assignment for other simulations,
we obtain a satisfying assignment of $N'$
which directs exactly the edges in $L$ towards $v$, as desired.
\end{proof}

To prove an analogous result for PGO uniqueness,
we need our simulations to preserve unique solutions.
That is, they should be parsimonious.

\begin{lemma}\label{lem:pgo sim uniqueness}
If $S$ parsimoniously simulates $T$,
then there is a polynomial-time reduction from
PGO uniqueness with $T$
to PGO uniqueness with $S$
\end{lemma}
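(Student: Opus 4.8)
The plan is to mirror the structure of Lemma~\ref{lem:compose sim} and the promise-inference reduction, using parsimony to get a clean bijection between satisfying assignments. Given an instance of PGO uniqueness with $T$ — a network $N$ together with a satisfying assignment $A$ — I would build the network $N'$ of gadgets from $S$ by replacing each gadget of $N$ with its parsimonious simulation, exactly as in the earlier reductions. The substantive new ingredient is that I also need to supply $N'$ with a satisfying assignment, so I must lift $A$ to an assignment $A'$ of $N'$. Since each simulation is parsimonious, the orientation that $A$ assigns to the (former) ports of a simulated gadget extends to a \emph{unique} satisfying assignment of that simulation; filling in each simulation this way, and keeping the inter-simulation edges as in $A$, produces a well-defined satisfying assignment $A'$ of $N'$.

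The heart of the argument is to show that $A'$ is the unique satisfying assignment of $N'$ if and only if $A$ is the unique satisfying assignment of $N$. First I would establish a bijection between satisfying assignments of $N$ and of $N'$: by Lemma~\ref{lem:compose sim} (or by repeating its argument) every satisfying assignment of $N'$ restricts to a satisfying assignment of $N$ on the inter-simulation edges, and conversely parsimony guarantees that each satisfying assignment of $N$ extends to exactly one satisfying assignment of $N'$. These two maps are mutually inverse, so they give a bijection, and moreover the bijection sends $A$ to $A'$. Consequently $N$ has exactly one satisfying assignment precisely when $N'$ does, and in the unique case the unique assignment of $N'$ is the image of the unique assignment of $N$, namely $A'$. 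Therefore $(N,A)$ is a yes-instance of PGO uniqueness with $T$ if and only if $(N',A')$ is a yes-instance with $S$.

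Finally I would note that the reduction runs in polynomial time: each gadget of $N$ is replaced by a constant-size simulation, so $N'$ has size linear in that of $N$, and computing the unique local assignment that extends $A$ on each simulation takes constant time per gadget. Planarity is preserved because planar simulations are substituted into a planar network respecting the cyclic port order, just as in the previous reductions.

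\textbf{The main obstacle} I anticipate is being careful that parsimony is genuinely what upgrades the ``onto'' correspondence of Lemma~\ref{lem:compose sim} into an honest \emph{bijection}: without parsimony a single satisfying assignment of $N$ could have several fillings in $N'$, which would break the ``unique $\Leftrightarrow$ unique'' equivalence. So the crux of the write-up is to invoke parsimony exactly at the point where we extend an assignment of $N$ into $N'$, asserting uniqueness of the extension on each simulation, and then to confirm that the designated input assignment $A'$ is the image of $A$ under this bijection.
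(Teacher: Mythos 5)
Your proposal is correct and follows essentially the same route as the paper's proof: replace each gadget with its parsimonious simulation to form $N'$, lift $A$ to $A'$ via the unique parsimonious extension on each constant-size simulation, and use parsimony to upgrade the correspondence between satisfying assignments of $N$ and $N'$ into a bijection sending $A$ to $A'$, so that uniqueness transfers in both directions. The paper's write-up is terser but contains exactly these steps, including the observation that $A'$ is computable in polynomial time because simulations have constant size.
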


\begin{proof}
We are given a network $N$ of gadgets from $T$
and a satisfying assignment $A$.
Construct a network $N'$ of gadgets from $S$
in the same way as the two previous proofs---replace
each gadget with its simulation.

Our instance of PGO uniqueness with $S$
also needs a satisfying assignment of $N'$.
To construct one, direct inter-simulation edges to match $A$,
and extend this to a full satisfying assignment $A'$
by consistently orienting edges inside simulations.
Since the simulations are parsimonious,
there is a unique way way to do this for each simulation,
and since simulations are constant-size,
$A'$ can be computed in polynomial time.

As before, there is a correspondence between
satisfying assignments of $N$
and satisfying assignments of $N'$.
This time, however,
parsimony ensures that the correspondence is a bijection,
as illustrated by the well-definedness of $A'$.
In particular, there is a satisfying assignment of $N$ other than $A$
if and only if there is a satisfying assignment of $N'$ other than $A'$.
\end{proof}

\subsection{Simpler gadget sets}\label{sec:pgo simplify}

Now we put the theory of simulations into practice:
we will demonstrate several simulations
and use them with the results of Section~\ref{sec:pgo simulation}
to improve the results of Section~\ref{sec:pgo hardness}
to use more convenient sets of gadgets.
The simulations we use are summarized in
Figure~\ref{fig:sims}.

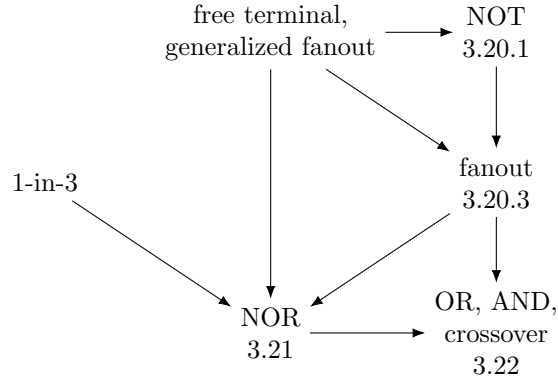
\begin{figure}
  \centering
  \begin{tikzpicture}

    \node[align=center] (fre) at (0,2) {free terminal,\\generalized fanout};
    \node[align=center] (not) at (3,2) {NOT\\\ref{lem:simfan}.1};
    \node[align=center] (fan) at (3,0) {fanout\\\ref{lem:simfan}.3};
    \node[align=center] (nor) at (0,-2) {NOR\\\ref{lem:simnor}};
    \node[align=center] (stf) at (3,-2) {OR, AND,\\crossover\\\ref{lem:simall}};
    \node[align=center] (cls) at (-3,0) {1-in-3};

    \draw[-{Latex}] (fre) -- (not);
    \draw[-{Latex}] (fre) -- (fan);
    \draw[-{Latex}] (not) -- (fan);
    \draw[-{Latex}] (fre) -- (nor);
    \draw[-{Latex}] (fan) -- (stf);
    \draw[-{Latex}] (nor) -- (stf);
    \draw[-{Latex}] (fan) -- (nor);
    \draw[-{Latex}] (cls) -- (nor);

  \end{tikzpicture}
  \caption{
    The (parsimonious) simulations
    we use to simplify our gadget set.
    Each gadget is simulated by the collection
    of gadgets pointing towards it,
    except that we will need to build
    1-in-3 gadgets, free terminals, and any generalized fanout directly.
  }
  \label{fig:sims}
\end{figure}

When we build gadgets in variants of Minesweeper,
we will construct gadgets that are like the fanout gate,
but may have more than three ports
and may have some ports reversed.

\begin{definition}
  A \defn{generalized fanout}
  is a gadget with at least three ports
  whose constraint has exactly two sets,
  which are compliments.
  That is, it has two legal configurations,
  which differ by flipping all edges.
\end{definition}

For instance, the fanout gate is a generalized fanout,
and the NOT gate is almost a generalized fanout,
except it has only two ports.

\begin{lemma}\label{lem:simfan}
  Let $F$ be a generalized fanout.
  Then $F$ and free terminals parsimoniously simulate
  \begin{enumerate}
    \item the NOT gate
    \item any generalized fanout $F'$
    \item fanout gates (with any number of outputs).
  \end{enumerate}
\end{lemma}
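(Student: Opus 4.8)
The plan is to exploit the defining feature of a generalized fanout: its constraint $\{A,\bar A\}$ consists of two complementary sets, so in any satisfying assignment the orientations of all $n\ge 3$ ports are determined by a single bit, and the two legal configurations are global flips of one another. Two tools will do all the work. First, attaching a free terminal to a port simply ``hides'' it: the free terminal imposes no constraint but is itself forced by whatever orientation $F$ assigns, so capping some ports with free terminals and leaving the rest dangling exposes a smaller gadget whose two configurations are the restrictions of $A$ and $\bar A$, and does so parsimoniously. Second, connecting one port of a copy of $F$ to one port of another couples their bits; since a single exposed port of a generalized fanout already determines its whole state, a \emph{connected, acyclic} arrangement of copies has exactly two satisfying assignments (global flips), which is what parsimony will require.

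For part (1), note that since $n\ge 3$ the partition $(A,\bar A)$ has at least two ports on one side; choosing two such ports $p,q$ and capping the remaining $n-2$ with free terminals exposes a two-port gadget whose configurations are exactly $\{p,q\}$ (both pointing in) and $\emptyset$ (both pointing out), i.e. the NOT gate constraint $\{\emptyset,\{a,c\}\}$, and the simulation is parsimonious by the first tool. This NOT gadget is precisely a polarity reverser: inserting it inline on a dangling edge flips whether that edge points inward in the bit-$0$ configuration.

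For part (3) I would build a fanout of arbitrary arity by chaining. Because $n$ is fixed, several copies are needed; I lay them out as a comb along a path, joining one port of each copy to the next and hanging the remaining $n-2$ (or $n-1$ at the ends) ports out as a line of would-be outputs, capping any I do not use with free terminals. Using the NOT gadget from part (1) inline, I align every exposed port to a common polarity, producing a \emph{pure} fanout (all exposed ports point in together or out together) of any desired arity; flipping the single designated input port then yields the fanout gate with constraint $\{\{a\},\{c_1,\dots,c_k\}\}$. For part (2), given a target generalized fanout $F'$ with constraint $\{B,\bar B\}$ on $m$ ports, I build a pure $m$-fanout and insert a NOT gadget on each exposed port whose required polarity (read off from the $B/\bar B$ split) disagrees with the common polarity; since the two global configurations are complementary it suffices to match the bit-$0$ polarities, and the result realizes exactly $B$ and $\bar B$.

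The step I expect to be the main obstacle is bookkeeping the planar layout and the cyclic order of ports while simultaneously guaranteeing parsimony. Concretely, I must check that the comb can be embedded so that the exposed ports appear along the outer face in the cyclic order demanded by $F'$---which is eased by the fact that a pure fanout is symmetric under all permutations of its ports, so any cyclic order is admissible and the per-port NOT flips can then be assigned position by position---and that the whole network (copies of $F$, together with the inline NOT and free-terminal gadgets, all of whose states are forced) stays connected and acyclic, so that it has exactly two satisfying assignments and hence exactly one assignment per configuration of the simulated gadget. Verifying this last point is what certifies that each of the three simulations is parsimonious.
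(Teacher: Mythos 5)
Your proposal is correct and follows essentially the same route as the paper's proof: cap ports with free terminals and use two same-side ports of $F$ to get the NOT gate, then join copies of $F$ in a tree (your comb is a special case), cap spares, and insert NOT gadgets to fix the polarity partition, with parsimony following because the connected acyclic arrangement has exactly two global satisfying assignments. The only cosmetic difference is that you build the plain fanout gate first and derive general $F'$ from it, whereas the paper constructs an arbitrary $F'$ directly and obtains the fanout gate as a special case.
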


\begin{proof}
  We describe each simulation.
  \begin{enumerate}
    \item Let $F$ have ports $P$ and constraint $\{T,F\}$, where $T\sqcup F=P$.
      Since $|P|\geq 3$, at least one of $T$ and $F$ has size at least $2$;
      assume without loss of generality that $|T|\geq 2$, and $a,b\in T$.
      Attach free terminals to all ports of $F$ other than $a$ and $b$.
      This simulation has two satisfying assignments,
      corresponding to $T$ and $F$.
      The simulated gadget has ports $a$ and $b$,
      and constraint $\{\{a,b\},\emptyset\}$, so it is the NOT gate.
    \item Connect copies of $F$ in a tree
      until there are at least as many dangling edges as ports of $F'$.
      Note that there are exactly two satisfying assignments;
      all copies of $F$ must flip state together.
      Now assign each port of $F'$ to a dangling edge, respecting cyclic order.
      If there are extra dangling edges, put free terminals on them.

      The result is a generalized fanout with the same ports as $F'$,
      but the partition into the two legal configurations may be wrong.
      For each port on the wrong side of the partition,
      attach a NOT gate
      (composing simulations using Lemma~\ref{lem:compose sim}).
      This changes which of the two satisfying assignments
      has the edge at that port pointing in.
      Now the simulated gadget partition its ports
      into two legal configurations in the same way as $F'$,
      so it is in fact $F'$.
    \item This is a special case of the above.
  \end{enumerate}
\end{proof}

\begin{lemma}\label{lem:simnor}
  1-in-3 gates, free terminals, and fanouts
  parsimoniously simulate the NOR gate.
\end{lemma}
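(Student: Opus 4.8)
The plan is to build the NOR gadget directly as a relation on the in-directions of its three ports. Reading its constraint $\{\{\},\{a,c\},\{b,c\},\{a,b,c\}\}$ from Table~\ref{tbl:gadgets}, the NOR gadget is exactly the rule ``$c$ points in iff $a$ or $b$ points in''; writing $1$ for ``points in'', it realizes the OR relation $c=a\lor b$. Before constructing anything I would record that, by Lemma~\ref{lem:simfan} together with composability of simulations (Lemma~\ref{lem:compose sim}), I am free to use NOT gates and arbitrary generalized fanouts --- in particular an equality/copy gadget, whose ports are either all in or all out --- in addition to $1$-in-$3$ gadgets and free terminals, since all of these are \emph{parsimoniously} simulated by fanouts and free terminals. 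So it suffices to realize $c=a\lor b$ parsimoniously and planarly from these richer primitives.

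The construction decomposes OR into the three requirements $c\ge a$, $c\ge b$, and the single non-monotone constraint forbidding $(a,b,c)=(0,0,1)$. I would first fan out $c$ into two copies and, inserting NOT gates, feed $\bar c$ into two $1$-in-$3$ gadgets: a $1$-in-$3$ gadget on the ports $a$, $\bar c$, and a fresh internal ``slack'' edge $f_1$ forces exactly one of them in, i.e.\ $f_1=c-a$, which is solvable precisely when $c\ge a$. A symmetric gadget on $b$, $\bar c$, $f_2$ enforces $c\ge b$ with $f_2=c-b$. The key observation is that across the four OR configurations the slacks $f_1,f_2$ are never simultaneously $1$, whereas the unwanted configuration $(0,0,1)$ is precisely the one forcing $f_1=f_2=1$. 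Hence a third $1$-in-$3$ gadget imposing ``at most one of $f_1,f_2$'' --- a $1$-in-$3$ on $f_1$, $f_2$, and a free terminal --- deletes exactly $(0,0,1)$ and leaves the OR relation.

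The step I expect to be the main obstacle is realizing the genuinely non-affine part of OR. A single $1$-in-$3$ clause containing two ports positively forbids those ports from being simultaneously in, so one cannot express ``$c\le a\lor b$'' (an at-least-one-of-three constraint) by a single clause without also killing the legal configuration $(1,1,1)$; this is exactly why naive attempts fail. The slack-variable trick sidesteps this by converting the three-way threshold into an at-most-one constraint on two \emph{internal} edges. Two subtleties accompany it. First, orientation bookkeeping: since a shared edge points into one incident gadget exactly when it points out of the other, I must insert NOT gates on $f_1,f_2$ so that the third gadget truly reads ``at most one'' rather than ``at least one'' (otherwise the complementary configuration is forbidden). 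Second, planarity: as the allowed gadget set has no crossover, the network must embed in the plane with $a,b,c$ on the outer face in the cyclic order of Table~\ref{tbl:gadgets}. Contracting the degree-two NOT gates, the core is the $4$-cycle joining the copy gadget, the $a$-gadget, the slack gadget, and the $b$-gadget, with the dangling edges $a,b,c$ and the free terminal hanging off its four vertices; this embeds with $a,b,c$ on the outer boundary in the order $(a,b,c)$, so no crossover is needed.

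Finally I would verify parsimony. For each of the four legal orientations of $(a,b,c)$, the copies of $c$, the slacks $f_1=c-a$ and $f_2=c-b$, and the free-terminal edge $1-f_1-f_2$ are all forced, so each legal configuration of the three ports extends to exactly one satisfying assignment of the simulation. Thus the simulation is parsimonious, and composing it with the parsimonious NOT, copy, and fanout simulations via Lemma~\ref{lem:compose sim} yields a parsimonious simulation of the NOR gate using only $1$-in-$3$ gadgets, free terminals, and fanouts.
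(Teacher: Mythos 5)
Your construction is correct and is essentially the paper's own: your three 1-in-3 constraints on $\{a,\bar c,f_1\}$, $\{b,\bar c,f_2\}$, and $\{f_1,f_2,g\}$ are exactly the paper's constraints on $\{a,x,c\}$, $\{b,y,c\}$, and $\{x,y,z\}$ (the paper's $c$ in those clauses denotes the NOR output's truth value, i.e.\ your $\bar c$, and your slacks $f_1,f_2$ and free edge $g$ are its $x,y,z$), with parsimony verified the same way, by observing that every internal edge is forced by the values of $a,b,c$. The only cosmetic difference is plumbing: the paper handles polarity directly with fanouts and free terminals in its figure, where you insert NOT gates obtained from Lemma~\ref{lem:simfan} and compose via Lemma~\ref{lem:compose sim}, which is equally valid.
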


\begin{proof}
  The simulation consists of three
  1-in-3 gadgets,
  four free terminals,
  two 2-way fanouts,
  and a 3-way fanout,
  and is shown in Figure~\ref{fig:simnor}.
  The inputs are $a$ and $b$ and the output is $c$.
  True means `pointing right',
  for inputs, outputs, and labeled free terminals.

  The simulation enforces a 1-in-3 constraint on each of
  $\{a,x,c\}$, $\{b,y,c\}$, and $\{x,y,z\}$.
  The easiest way to verify that
  this is a parsimonious simulation of a NOR gate
  is to list all satisfying assignments:

  $$\begin{array}{cccccc}
    a & b & x & y & z & c \\\hline
    T & T & F & F & T & F \\
    T & F & F & T & F & F \\
    F & T & T & F & F & F \\
    F & F & F & F & T & T
  \end{array}$$

  We see that $c$ is always $\lnot(a\lor b)$,
  and there is a unique assignment for each combination
  of values for $a$ and $b$.
  When comparing against the definition of the NOR gate
  in Table~\ref{tbl:gadgets},
  recall that the correspondence between true/false
  and in/out
  is reversed for $c$ relative to $a$ and $b$.

  \begin{figure}
    \centering
    \begin{circuitikz}
      \draw
        (2.5,0) node[
        muxdemux,
        muxdemux def={NL=3, NB=0, Lh=3, Rh=2, w=0.4},
        rotate=180
        ] (out) {}

            (-1,1.33) node[
            muxdemux,
            muxdemux def={NL=2, NB=0, Lh=3, Rh=2, w=0.4},
            rotate=180
            ] (vtop) {}

            (-1,-1.33) node[
            muxdemux,
            muxdemux def={NL=2, NB=0, Lh=3, Rh=2, w=0.4},
            rotate=180
            ] (vbot) {}

            (vtop.lpin 2) -- (2.75,1.75)
            (vbot.lpin 1) -- (2.75,-1.75)

            (vtop.lpin 1) -| (0,0.25)
            (vbot.lpin 2) -| (0,-0.25)
            (0,0) circle (0.25) node {\sfrac13}

            (-2,1.33) node[left] {\(x\)} -- (vtop.rpin 1)
            (-2,0) node[left] {\(z\)} -- (-0.25,0)
            (-2,-1.33) node[left] {\(y\)} -- (vbot.rpin 1)

            (-3,3) node[left] {\(a\)} -- (3,3) -- (3,2) (3,1.75) circle (0.25) node {\sfrac13} (3,1.5) |- (out.lpin 3)
            (-3,-3) node[left] {\(b\)} -- (3,-3) -- (3,-2) (3,-1.75) circle (0.25) node {\sfrac13} (3,-1.5) |- (out.lpin 1)

            (out.lpin 2) -- ++(1,0) node[right] {\(c\)}
            (out.rpin 1) -- ++(-.67,0) node[left] {\(c\)}
            ;

    \end{circuitikz}
    \caption{
      A simulation of a NOR gate using
      1-in-3 gadgets,
      free terminals,
      and fanouts.
    }
    \label{fig:simnor}
\end{figure}
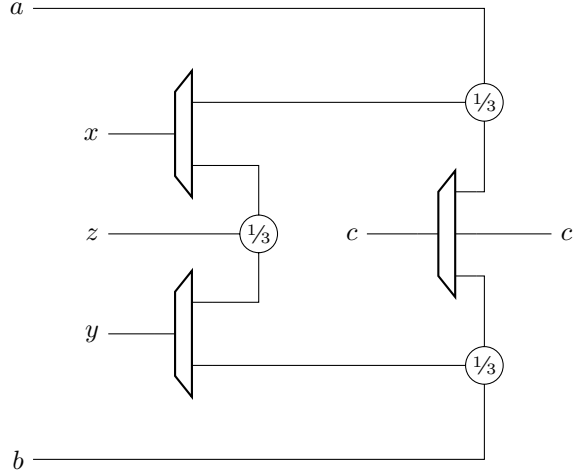

\end{proof}

\begin{lemma}\label{lem:simall}
NOR gates and fanout gates
parsimoniously simulate OR gates, AND gates, and crossovers.
\end{lemma}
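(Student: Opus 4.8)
The plan is to build each target gadget as a small acyclic boolean circuit of NOR and fanout gates, leaning on the observation from Section~\ref{sec:pgo gates} that any acyclic network of gate gadgets has exactly one satisfying assignment for each setting of its dangling input edges. This gives parsimony essentially for free: every simulation I construct will be such a circuit, so for each legal configuration of the ports there is a unique internal assignment, and Lemma~\ref{lem:compose sim} lets me assemble the pieces while preserving parsimony.

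The first step is to recover the NOT gate, since it unlocks the rest. Fanning an input out into two copies and feeding both into a single NOR gate computes $\mathrm{NOR}(x,x)=\lnot x$, using one $2$-way fanout and one NOR gate; this is an acyclic circuit, hence a parsimonious simulation of the NOT gate. With negation in hand, OR and AND follow from standard identities: $a\lor b=\lnot\,\mathrm{NOR}(a,b)$ realizes OR as a NOR followed by a NOT, and De Morgan's law $a\land b=\mathrm{NOR}(\lnot a,\lnot b)$ realizes AND as two NOTs feeding a NOR. Each is acyclic, so the compositions stay parsimonious, and checking the four-row truth tables against Table~\ref{tbl:gadgets} confirms the simulated constraints are exactly those of the OR and AND gate gadgets (remembering the reversed true/in convention on the output port).

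The crossover is the main obstacle and needs more care. I would first argue that its constraint $\{\{a_1,b_1\},\{a_1,b_2\},\{a_2,b_1\},\{a_2,b_2\}\}$ is precisely what a directed circuit routing the signal at $a_1$ to $a_2$ and the signal at $b_1$ to $b_2$ produces, provided the two routed wires physically cross: treating $a_1,b_1$ as free inputs and $a_2,b_2$ as outputs, an input edge pointing in encodes ``true,'' which forces the matching output edge to point out, so in every satisfying assignment exactly one of $\{a_1,a_2\}$ and one of $\{b_1,b_2\}$ points into the simulation. To realize such a planar crossing with logic gates I would use the classical three-XOR swap, which sends inputs $(p,q)$ to outputs $(q,p)$ using only planar XOR gates and fanouts; placing the ports in the cyclic order $a_1,b_1,a_2,b_2$ forces the two routed wires to cross as required. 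Since XOR is itself assembled from NOR and fanout---for example $a\oplus b=\lnot(a\leftrightarrow b)$, where the XNOR $a\leftrightarrow b$ is a short chain of NOR gates and fanouts supply the repeated inputs---the whole crossover is an acyclic NOR/fanout circuit and hence a parsimonious planar simulation.

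The remaining work is bookkeeping that I expect to be routine given the circuit-to-gadget correspondence already in place: verifying the XOR-from-NOR truth table, and drawing the three-XOR swap so that it is genuinely planar with the dangling edges in the correct cyclic order and the $a$- and $b$-routing wires crossing exactly once. The logic identities and parsimony checks are mechanical; the only step that genuinely needs a careful figure is the planar embedding of the crossover swap.
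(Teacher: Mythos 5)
Your proposal is correct and follows essentially the same route as the paper: both convert planar circuits of NOR gates and fanouts into gadget networks and invoke the observation of Section~\ref{sec:pgo gates} that such a network has exactly one satisfying assignment per combination of input orientations, which gives parsimony for free. The only difference is that the paper delegates the planar crossover to the citation \cite{monotone}, whereas you unpack it explicitly as the classical three-XOR swap with XOR assembled from NOR gates and fanouts---which is precisely the construction that citation supplies, so nothing is missing.
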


\begin{proof}
In Section~\ref{sec:pgo gates} we observed that
boolean circuits can be converted to gadget networks;
these networks can be thought of as parsimonious simulations.
The NOR operation is logically complete
and can build crossovers in planar circuits \cite{monotone}.
Hence for each gate we want to simulate,
we can find a planar circuit of NOR gates that computes it,
then convert this into a planar network of NOR-gate gadgets and fanout-gate gadgets.
\end{proof}

We now pull everything together
by applying the results on simulation in Section~\ref{sec:pgo simulation}
to the hardness results of Section~\ref{sec:pgo hardness},
using the simulations in this section.
We can either compose reductions
or compose simulations using Lemma~\ref{lem:compose sim};
either way, we obtain our main result
about planar graph orientation decision problems.

\begin{corollary}\label{cor:simpler gadgets}
PGO consistency with
any generalized fanout,
fixed terminals,
free terminals,
and 1-in-3 gadgets
is NP-hard.
PGO promise-inference
and PGO uniqueness
with the same set of gadgets
are coNP-hard.
\end{corollary}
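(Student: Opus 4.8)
The plan is to combine the three hardness theorems of Section~\ref{sec:pgo hardness} with the simulation machinery of Section~\ref{sec:pgo simulation} and the specific simulations built in this subsection. Write $S$ for the gadget set consisting of an arbitrary generalized fanout $F$, fixed terminals, free terminals, and 1-in-3 gadgets. For each of the three hardness theorems, let $T$ denote the gadget set that theorem uses; the goal is to show that $S$ parsimoniously simulates $T$, so that the corresponding reduction lemma from Section~\ref{sec:pgo simulation} converts the theorem's hardness into hardness for $S$.

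First I would assemble the required simulations into a single chain, repeatedly invoking Lemma~\ref{lem:compose sim} both to compose them and to carry parsimony through each step. By Lemma~\ref{lem:simfan}, $F$ together with free terminals parsimoniously simulates NOT gates and fanout gates with any number of outputs; since $F$ and free terminals lie in $S$, this shows $S$ parsimoniously simulates both. Because $S$ also contains 1-in-3 gadgets, and now simulates fanout gates, Lemma~\ref{lem:simnor} (composed with the above via Lemma~\ref{lem:compose sim}) gives that $S$ parsimoniously simulates NOR gates. Then Lemma~\ref{lem:simall} yields that NOR gates and fanout gates parsimoniously simulate OR gates, AND gates, and crossovers, so one further composition shows $S$ parsimoniously simulates all of these as well. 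Combined with the fact that $S$ contains free terminals, fixed terminals, and 1-in-3 gadgets outright (and any gadget simulates itself parsimoniously), we conclude that $S$ parsimoniously simulates every gadget appearing in any of the three theorems.

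With these simulations established, I would finish by applying the appropriate reduction lemma to each problem. For consistency, $T$ consists of free terminals, fanout gates, and 1-in-3 gadgets, all simulated by $S$, so the reduction lemma for PGO satisfiability reduces the NP-hard problem with $T$ to PGO consistency with $S$, giving NP-hardness. For promise-inference, $T$ additionally includes crossovers, OR gates, and AND gates, again all simulated by $S$, so the promise-inference reduction lemma transfers coNP-hardness. For uniqueness, $T$ further adds NOT gates and fixed terminals; here the reduction lemma (Lemma~\ref{lem:pgo sim uniqueness}) demands \emph{parsimonious} simulation, which is precisely why every link in the chain above was tracked as parsimonious.

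I expect the main obstacle to be bookkeeping rather than a new idea. Two points need care: that parsimony is genuinely preserved across the entire composition (this is exactly the content of Lemma~\ref{lem:compose sim}, so it is routine once the chain is laid out), and that the whole argument works for \emph{every} generalized fanout $F$ rather than one fixed gadget. The latter holds because Lemma~\ref{lem:simfan} is stated for an arbitrary generalized fanout, so no step depends on the particular $F$; this genericity is what will later let us implement whichever single fanout-like gadget happens to be convenient in each Minesweeper variant.
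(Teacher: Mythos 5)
Your proposal is correct and follows essentially the same route as the paper: the paper's own justification is precisely to combine the hardness theorems of Section~\ref{sec:pgo hardness} with the reduction lemmas of Section~\ref{sec:pgo simulation}, chaining the simulations of Lemmas~\ref{lem:simfan}, \ref{lem:simnor}, and~\ref{lem:simall} via Lemma~\ref{lem:compose sim} while tracking parsimony for the uniqueness case. Your write-up simply makes explicit the bookkeeping (which gadget set $T$ each theorem uses, and that every link in the simulation chain is parsimonious) that the paper leaves implicit.
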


\subsection{Applying the framework}\label{sec:impl}

We now conclude the development of our gadget framework
by discussing what it takes to use it
to prove hardness for a game like Minesweeper.
We do not attempt to precisely define ``like Minesweeper'',
and this results of this section are not stated precisely.
Ultimately, one needs a polynomial-time reduction from the appropriate
PGO problem to the corresponding question for the game under consideration,
but the games we will consider are similar enough that
the reductions share most of their structure,
and the discussion here applies to many or all of them.

It is conceptually helpful to distinguish between
gadgets as abstract formal objects
and the constructions we build in concrete games
to embed gadgets in them.
We call the latter `implementations',
to parallel the distinction between
the specification and the implementation of a function.

\begin{definition}
An \defn{implementation} of a gadget is
a construction in a game like Minesweeper which behaves like the gadget,
in that there are covered cells (usually on the boundary of the construction)
corresponding to ports, and the construction has a solution
exactly when the configuration of edges
represented by the values of those cells
is legal for the gadget.
\end{definition}

Unfortunately, because it matches how the term is typically used,
we will frequently call implementations ``gadgets''
when the intended meaning is clear from context.

For implementations to interact in a way that simulates gadget networks,
we need some kind of \defn{wire}.
Each of our wires will have two possible states,
corresponding to the two orientations of the edge it represents.
We must be able to route our wires in an arbitrary planar graph---the
ability to extend and turn them is sufficient.
We need to be able to plug wires into the ports of implementations,
which sometimes requires the ability to adjust
the alignment of a wire by a single cell.

Our wires and implementations also need to have the following properties.
\begin{itemize}
  \item Every solution uses the same number of mines.
    This means the total number of mines in the puzzle doesn't
    reveal any information that might break our reduction.
  \item Every uncovered cell has a clue,
    as in most implementations of Minesweeper.
    There are versions of Minesweeper which
    allow cells that are known
    to be safe but don't have additional information
    (e.g. showing a question mark instead of a number),
    but we don't want to rely on this feature.
  \item All given mines can be deduced from uncovered non-mine cells.
    This provides robustness against changes in the definition,
    e.g. whether the locations of mines can be provided as part of a
    puzzle in addition to uncovered cells without mines.
\end{itemize}

Implementing some gadgets in a game
allows us to a build network of those gadgets in the game.
The resulting instance of the game has a solution
if and only if the network is satisfiable.

\begin{claim}
If a game like Minesweeper has implementations of the gadgets in $S$,
then there is a polynomial-time reduction
from PGO consistency with $S$
to the game's consistency problem.
\end{claim}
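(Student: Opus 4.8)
The plan is to build, from a planar network $N$ of gadgets from $S$, a single instance of the game whose consistency is equivalent to satisfiability of $N$. The construction lays out the implementations on the board according to the planar embedding that comes with $N$, and then connects their ports with wires that follow the edges of $N$. I will argue both that this can be done in polynomial time and that the resulting board is consistent exactly when $N$ has a satisfying assignment.

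First I would fix a concrete grid layout of $N$. Since $N$ is a planar network it comes with a planar embedding, and because every gadget in the finite set $S$ has a constant number of ports, $N$ has bounded degree; standard planar orthogonal grid-drawing results then let me place each vertex in its own rectangular region and route each edge as an orthogonal path through disjoint channels, all within area polynomial in $|N|$. I would scale this layout up by a constant factor determined by the (constant) sizes of the implementations of gadgets in $S$, so that each region is large enough to hold the implementation labeling its vertex, oriented by rotation and reflection so that its ports emerge in the cyclic order dictated by the embedding. Each edge of $N$ then becomes a wire: starting from one port, I extend and turn the wire along the routed path, using single-cell alignment adjustments where needed to dock into the two ports it connects. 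Because implementations have constant size and there are polynomially many vertices, edges, turns, and wire segments, the whole board has polynomial size and is produced in polynomial time.

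Next I would verify the equivalence of solutions. The only way two implementations, or an implementation and a wire, interact is through the shared cells of a common port, and each wire has exactly two global states corresponding to the two orientations of its edge. Hence any solution of the assembled board restricts to a solution of each implementation and assigns a single state to each wire; reading each wire's state as an orientation of the corresponding edge yields an assignment of $N$ in which every vertex sees a legal configuration at its ports (because its implementation has a solution), that is, a satisfying assignment. Conversely, given a satisfying assignment of $N$, I set each wire to the state matching its edge's orientation; then every implementation faces a legal port configuration and therefore has a local solution, and since the pieces agree on all shared port cells these local solutions glue into a global solution of the board. Thus the board is consistent if and only if $N$ is satisfiable, and consistency of the game instance only requires existence of a solution, so no bijection between solutions is needed.

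I expect the main obstacle to be the layout-and-routing step rather than the logical equivalence: I must guarantee that the planar embedding of $N$ can be realized on the grid with polynomial area while respecting the fixed cyclic port order of each implementation, that wires can always be turned and aligned to dock into ports without colliding, and that connecting the ports never forces unintended interactions between distant parts of the board. This is exactly what the assumed wire capabilities—extension, turning, single-cell alignment, and two clean states—are designed to provide, so the argument reduces to checking that these primitives suffice to trace out an arbitrary planar orthogonal drawing, a routine but notationally heavy verification that I would carry out at the level of the abstract \emph{wire} interface rather than within any specific game.
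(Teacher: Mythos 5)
Your proposal is correct and takes essentially the same approach as the paper: the paper deliberately states this as an informal, unproven claim (since ``a game like Minesweeper'' is never precisely defined), and its intended justification is exactly your argument---embed the network in the game using the assumed wire capabilities (extend, turn, single-cell alignment), then observe that the board has a solution if and only if the network has a satisfying assignment, with no parsimony needed for mere consistency. Your added detail on polynomial-area orthogonal layout corresponds to the paper's later tile/grid abstraction (Corollary~\ref{cor:grid}), so there is no substantive divergence.
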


Note that our definition of Minesweeper consistency
in Section~\ref{sec:prior} is specific to Minesweeper.
For other games or variants, `partial board' and `consistent'
need to be defined appropriately,
but the other definitions (for inference and solvability) work as is.

For the other two decision problems,
we need well-behaved implementations.

\begin{definition}
An implementation is \defn{silent} if clicking
a known-safe internal cell can never reveal information
that wasn't already known.
In other words, for each covered internal cell,
the information that cell provides when clicked
(e.g. the number of adjacent mines)
is the same in every solution in which it doesn't have a mine.
\end{definition}

\begin{claim}
If a game like Minesweeper has silent implementations of the gadgets in $S$,
then there is a polynomial-time reduction
from PGO promise-inference with $S$
to the game's inference problem.
\end{claim}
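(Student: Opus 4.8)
The plan is to build a Minesweeper board $B$ from the promise-inference instance $N$ by implementing each gadget with its silent implementation and joining them with wires following the planar structure of $N$, exactly as in the preceding consistency reduction. The consistent arrangements of $B$ then project onto the satisfying assignments of $N$, with each wire's two states encoding the orientation of the edge it represents, and with each gadget's interior free to vary among the internal solutions compatible with its port configuration. On top of this I would add a pre-revealing step: for every gadget $G$, inspect the (constant-size) implementation together with the gadget's semilocal constraint, call it $SC(G)$, and uncover every internal cell that is safe in all solutions of the implementation whose port configuration lies in $SC(G)$, as well as every wire cell lying on a locally-forced edge. Silence is exactly what makes this legal: such a cell has the same clue in every solution in which it is not a mine, so we may uncover it with a well-defined clue, and doing so rules out no consistent global arrangement, since by the first part of the promise every satisfying assignment respects the semilocal constraints and hence already had these cells safe with that clue. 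Each gadget contributes only constant work, so the whole reduction runs in polynomial time.

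I would then prove that $B$ has an inference if and only if the instance falls into the first option of the promise. For the easy direction, suppose the first option holds, so some locally-free edge $e$ receives the same orientation in every satisfying assignment. Because $e$ is locally free, its wire was not pre-revealed; because its orientation is globally forced, the wire sits in a single state in every consistent arrangement, so the wire encodes that orientation through the mine-status of its cells and thus leaves some covered cell safe in every consistent arrangement, which is an inference. (If $N$ is unsatisfiable this degenerates gracefully: $B$ is then inconsistent and every covered cell is vacuously an inference, still matching the first option, into which an unsatisfiable instance falls.)

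The real work is the second option: assuming every element of every semilocal constraint is globally achievable, I must show $B$ has no inference, i.e. that every covered cell is a mine in some consistent arrangement. After pre-revealing, the remaining covered cells are of two kinds: wire cells on locally-free edges, and internal gadget cells that are a mine in some semilocally-compatible solution. For the former, local freeness means both orientations of the edge occur in the relevant semilocal constraints, so global achievability yields a satisfying assignment placing the edge in the orientation that makes the cell a mine. For the latter, I would take a semilocally-compatible implementation solution $s$ of $G$ witnessing that the cell $c$ is a mine, let $E \in SC(G)$ be its port configuration, use global achievability to obtain a satisfying assignment $A$ whose port configuration on $G$ is $E$, and then splice $s$ into $A$ in place of $A$'s interior on $G$. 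Since $s$ agrees with $A$ on all ports and implementations communicate only through their port cells, the rest of the board is unaffected, the pre-revealed clues of $G$ remain satisfied (they are safe in every solution with port configuration in $SC(G)$, in particular in $s$), and the result is a consistent arrangement with $c$ a mine.

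The step I expect to be the main obstacle is this splicing argument in the second option, since it is precisely where the abstract framework meets the concrete game and all three hypotheses must be combined: global achievability supplies an assignment with the correct port configuration, the independence of an implementation's interior from the rest of the board (mediated only by wires) lets me exchange interiors freely, and silence guarantees that the pre-revealed clues stay consistent with every such exchange. I would state and use this independence property explicitly, and verify that pre-revealing introduces no inference of its own, which follows because every arrangement I construct respects the semilocal constraints and hence the revealed clues. Finally, since the two options of the promise are mutually exclusive, the resulting equivalence between the existence of an inference in $B$ and the first option completes the reduction.
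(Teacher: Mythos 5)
Your proposal is correct and takes essentially the same approach as the paper: embed the network using silent implementations, pre-reveal the commonality cells determined by each gadget's semilocal constraint (with silence making the revealed clues well defined), and then split on the two options of the promise, with the second option handled by exactly the splice-a-compatible-interior argument that the paper compresses into a single sentence. The only difference is presentational---you make the interior-exchange step and the ``pre-revealing creates no new inference'' check explicit where the paper leaves them implicit.
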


\begin{proof}
Embed a network of gadgets
from an instance of PGO promise-inference with $S$ in the game.
For each (constant size) gadget in the network,
consider all solutions to its implementation
which are consistent with its semilocal constraint.
If there are \defn{commonalities},
or cells which are safe in all such solution
(or mines in all such solutions),
``click on'' them, revealing them in the instance of the game.
Silence guarantees that the information revealed
is the same in all such solutions,
so the information to put on that cell can be determined
from the semilocal constraint.
If a commonality dictates the orientation of an edge,
we click on all covered cells in the edge in the same way.
It doesn't matter whether this includes the cell(s)
representing the port on the other end of the edge---if
that gadget's semilocal constraint doesn't force the edge in the same way,
we must be in the first case of the promise
so there's an inferable locally unforced edge anyway.

If the network has an inferable locally unforced edge,
then cells inside the wire representing that edge
(or cells representing the ports that edge connects to)
can be inferred.

Otherwise, we are in the second case of the promise.
We've already clicked all cells that can be deduced
from each semilocal constraint---for any cell
that is still covered, the gadget (or wire) it's in
has solutions compatible with its semilocal constraint
in which that cell has a mine and in which it doesn't.
Thus the entire instance also has both of these kinds of solutions
because we are promised that every element of a semilocal constraint
is achieved by a satisfying assignment.
That is, the cell in question can't be inferred.
This is where it is crucial that we use promise-inference,
and not a simpler PGO inference decision problem.
\end{proof}

For PGO uniqueness and Minesweeper solvability,
we also need parsimony as we did for Lemma~\ref{lem:pgo sim uniqueness}.

\begin{definition}
An implementation is \defn{parsimonious} if it has exactly one solution
corresponding to each legal configuration of the gadget.
\end{definition}

\begin{claim}
If a game like Minesweeper has
silent parsimonious implementations of the gadgets in $S$,
then there is a polynomial-time reduction
from PGO uniqueness with $S$
to the game's solvability problem.
\end{claim}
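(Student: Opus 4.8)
The plan is to mirror the two preceding claims: embed the network in the game, then choose the secret mines and the revealed cells so that the player can clear the board exactly when the assignment we are handed is the only satisfying assignment. Given an instance $(N,A)$ of PGO uniqueness with $S$, I would first build the same embedding of $N$ used in the consistency and inference reductions. Parsimony of the implementations and wires upgrades the usual surjection to a bijection between game solutions and satisfying assignments of $N$ (exactly as parsimony was used in Lemma~\ref{lem:pgo sim uniqueness}). I then take $M$ to be the mine set of the game solution corresponding to $A$, which is computable in polynomial time since each constant-size implementation has, by parsimony, a single solution for its induced port configuration. For $K$ I reveal the \emph{commonalities}: every cell that is safe in all solutions of its (constant-size) implementation or wire, exactly the cells clicked in the inference reduction. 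Inspecting the constantly many local solutions makes $K$ polynomial-time computable, and silence guarantees that the clue written on each commonality is well defined, since it agrees across all local solutions in which that cell is empty (non-mine).

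The correctness argument rests on the claim that the mine arrangements consistent with the clues of $K$ are exactly the game solutions, i.e.\ the satisfying assignments of $N$. One inclusion is immediate: every solution is empty on all commonalities and satisfies every clue, so it is $K$-consistent. For the reverse inclusion I rely on the structure of our wires and gadgets, whose commonality clues are arranged to force the remaining covered cells into a legal local configuration; this is the one place the concrete implementations (not just silence and parsimony) enter. Granting this, the forward direction is immediate: if $A$ is the unique satisfying assignment then $M$ is the \emph{only} $K$-consistent arrangement, so every covered non-mine cell is empty in every consistent arrangement and is thus an inference. The player may click these cells in any order and win, so $M$ is solvable from $K$.

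For the converse, suppose $A$ is not unique and fix a second satisfying assignment $A'$. Because we require every solution to use the same number of mines, the mine patterns of $A$ and $A'$ are incomparable, so some cell $c^*$ is empty in $A=M$ but mined in $A'$; in particular $c^*$ is a covered non-mine cell that any winning order must eventually click. The crucial invariant is that the set of $K'$-consistent arrangements never shrinks as the player infers cells: a cell the player can click is empty in every currently-consistent arrangement, and silence forces the revealed clue to be identical across all of them, so revealing it eliminates no arrangement---in particular never the one coming from $A'$. Hence $A'$ stays consistent at every step, $c^*$ is mined in a consistent arrangement throughout, and $c^*$ is never an inference; no order clears the board, so $M$ is unsolvable from $K$ (with certificate $K'$ the set of all cells ever revealed, as in the containment proof). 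The main obstacle is getting the choice of $K$ right on both sides at once: the commonalities must enforce every gadget constraint (so the forward direction collapses the consistent set to $\{M\}$) while revealing nothing about any edge's orientation (so the silence invariance keeps $A'$ alive); the ``same number of mines'' and silence properties are exactly what make this balance possible.
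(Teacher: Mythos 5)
Your proof is correct and follows essentially the same route as the paper's: embed the network, use parsimony to turn the correspondence between satisfying assignments and game solutions into a bijection, take $M$ from the given assignment, and use silence to show that safe clicks never eliminate a consistent arrangement, so a second assignment $A'$ survives every click while uniqueness makes every covered non-mine cell inferable. Your two additions---revealing the commonalities in $K$ (the paper reveals only the cells uncovered in the embedding itself; the extra reveals are harmless here but unneeded, since solvability, unlike inference, does not require killing local inferences) and explicitly invoking the equal-mine-count property to produce a cell mined in $A'$ but not in $M$---are sound refinements of steps the paper leaves implicit.
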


\begin{proof}
Embed a network $N$ of gadgets
from an instance of PGO uniqueness with $S$ in the game.
Thanks to parsimony,
satisfying assignments of the network
are in bijection with consistent solutions to this instance of the game.
The secret solution is the solution corresponding to
the given satisfying assignment of $N$,
but cells are only uncovered if they are uncovered in the embedding of $N$,
which doesn't depend on the satisfying assignment.

If $N$ has another satisfying assignment,
the game instance has multiple solutions.
The player may be able to safely click some cells---perhaps
the orientation of some edge is forced,
or they can deduce that a cell inside a gadget is safe.
However, since our implementations are silent,
the player can't gain any information by doing this.
In particular, all solutions consistent with the initial state of the game
are also consistent after the player makes any sequence of safe clicks.
In order to solve the instance, they would need to distinguish
between these consistent solutions, which is impossible.

Conversely, suppose the only satisfying assignment to $N$
is the one given to the reduction.
Then the secret solution is the only solution consistent
with the initial state of the game
(we need parsimony to ensure there is only one).
Thus the value of every cell can be deduced,
and all cells without mines can be safely clicked in any order.
\end{proof}

It follows that if we can silently and parsimoniously
implement that gadgets needed by Corollary~\ref{cor:simpler gadgets},
we have hardness of consistency, inference, and solvability
for the game in question.
To simplify a little further,
fixed terminals are trivial to construct:
just let a wire end, and reveal some cells in it
to force its orientation.
Alternatively, modify a free terminal
(possibly including a bit of wire)
by revealing cells that determine its configuration.

\begin{corollary}\label{cor:simple}
  Suppose that for some game like Minesweeper,
  we have silent parsimonious implementations of
  any generalized fanout,
  free terminals,
  and 1-in-3 gadgets,
  and we are able to route, turn, and adjust the alignment of
  wires enough to embed any planar network of those gadgets.
  Then consistency is NP-hard,
  and inference and solvability are coNP-hard.
\end{corollary}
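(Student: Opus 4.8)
The plan is to deduce this corollary from Corollary~\ref{cor:simpler gadgets} together with the three implementation claims immediately preceding it. That corollary establishes that PGO consistency is NP-hard, and PGO promise-inference and PGO uniqueness are coNP-hard, for the gadget set $S$ consisting of a generalized fanout, fixed terminals, free terminals, and 1-in-3 gadgets. The three claims say that suitable implementations of the gadgets in a set yield polynomial-time reductions from the corresponding PGO problems to the game's consistency, inference, and solvability problems (requiring, respectively, implementations, silent implementations, and silent parsimonious implementations). So the only gap between the hypotheses and the conclusion is that we are handed implementations of everything in $S$ except the fixed terminals; my first step is therefore to manufacture a silent parsimonious implementation of a fixed terminal out of what is already available.

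To build the fixed terminal, I would follow the recipe in the remark above: take a short length of wire ending at a single port and reveal (``click'') cells along it whose clues are consistent with only one of the wire's two states. Since the wire is parsimonious, each of its two states is realized by exactly one solution; revealing cells that distinguish the states destroys the solution for the unwanted state and leaves precisely one solution overall, which is exactly the single legal configuration of a fixed terminal, so the implementation is parsimonious. Silence is then automatic: with only one solution surviving, every internal covered cell has a determined value, so clicking any such cell reveals nothing that was not already forced. The hypotheses that we can route, turn, and align wires guarantee that this terminal can be attached wherever a fixed terminal is called for, and they also let us embed networks that include these terminals, since a terminal is just a wire-end. One should additionally verify the standing well-behavedness conditions (equal mine count across solutions, every uncovered cell carrying a clue, and all given mines deducible), but each is immediate once only one solution remains.

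With fixed terminals now implemented, we have silent parsimonious implementations of every gadget in $S$, which is more than any of the three claims demands. Applying those claims to $S$ yields polynomial-time reductions from PGO consistency, promise-inference, and uniqueness with $S$ to the game's consistency, inference, and solvability problems, respectively, and composing these with the hardness supplied by Corollary~\ref{cor:simpler gadgets} gives NP-hardness of the game's consistency problem and coNP-hardness of its inference and solvability problems. I expect the only delicate point to be the inference reduction: PGO promise-inference is a promise problem, but a reduction from a coNP-hard promise problem to an ordinary decision problem still witnesses coNP-hardness, since the promise-respecting instances already encode a coNP-hard language and the reduction preserves yes/no answers on them. This bookkeeping, rather than any single construction, is the main thing to get right; the substantive work was already done in the claims and in Corollary~\ref{cor:simpler gadgets}.
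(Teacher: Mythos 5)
Your proposal is correct and follows essentially the same route as the paper: the corollary is obtained by combining Corollary~\ref{cor:simpler gadgets} with the three implementation claims of Section~\ref{sec:impl}, and closing the gap on fixed terminals exactly as the paper suggests (terminate a wire, or modify a free terminal, and reveal cells to force its orientation, which trivially yields silence and parsimony since only one solution survives). Your extra remark on why hardness of a promise problem transfers through the reduction to the game's ordinary decision problem is sound bookkeeping that the paper leaves implicit.
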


We now go through one final layer of abstraction,
which will handle routing of wires and filling empty space.
We lay a network of the gadgets above on a square grid,
where edges run horizontally and vertically and can turn.
In particular, each tile contains either one of the gadgets above,
a straight edge section, and turning edge section,
or nothing.
Tiles have up to one port on each edge, usually at the center.

This lets us reduce from problems about
planar networks on grids,
so for wire routing we need only construct tiles with straight and turning wires.

It turns out we don't need turning wires,
and can instead turn using any generalized fanout
and the other tiles.
As a first attempt, pick two ports on adjacent sides of the generalized fanout
and cover its other ports with free terminals.
This sometimes makes a turning wire,
but if the chosen ports have the same `polarity',
meaning they point in or out together,
it instead makes a turning NOT gate.
If we can only build turning NOT gates this way,
all ports of the generalized fanout must have the same polarity,
so its legal configurations are all-in and all-out.
But the generalized fanout must have two ports on opposite sides,
so we can make a straight NOT gate by covering the other ports with free terminals.
A turning NOT gate and a straight NOT gate in series gives a turning wire.

Finally,
for a few of our applications, instead of building a 1-in-3 gadget
we build a 2-in-3, 1-in-4, or 3-in-4.
With NOT gates and free terminals,
each of these can easily simulate a 1-in-3.

\begin{corollary}\label{cor:grid}
  Suppose that for some game like Minesweeper,
  we have silent parsimonious implementations of gadgets
  which are all square tiles that fit in a grid
  and interact appropriately with adjacent tiles.
  Suppose in particular that we have
  empty tiles, straight wires,
  any generalized fanout,
  free terminals,
  and any one of 1-in-3, 2-in-3, 1-in-4, or 3-in-4 gadgets.
  Then consistency is NP-hard,
  and inference and solvability are coNP-hard.
\end{corollary}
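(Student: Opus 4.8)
The plan is to reduce to Corollary~\ref{cor:simple}: it suffices to turn the given grid tiles into silent parsimonious implementations of a generalized fanout, free terminals, and a 1-in-3 gadget, together with enough wiring to embed any planar network. The generalized fanout and free terminals are given outright, and laying the target network on a sufficiently refined square grid with ports at the centers of tile sides reduces all wiring needs to empty tiles, straight wires, and turning wires; the first two are given, and uniform tile geometry subsumes the ``adjust alignment'' requirement of Corollary~\ref{cor:simple}. So the real work is to build turning wires and, when only a variant is supplied, a 1-in-3 gadget, all as silent parsimonious constructions.

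First I would build turning wires from the generalized fanout, following the sketch preceding the corollary. Each port of the fanout has a \emph{polarity}: whether it points in or out in the first of the two complementary legal configurations. Two ports together yield a wire (``exactly one in'') when their polarities differ and a NOT gate (``both in or both out'') when they agree. Covering all but two ports with free terminals and reading off the resulting two-port gadget, an adjacent opposite-polarity pair gives a turning wire directly. Otherwise every adjacent pair agrees, so all ports share a polarity; then an opposite-side pair yields a straight NOT gate and an adjacent pair a turning NOT gate, and composing the two in series (an L-shaped block, legal after refining the grid) gives a turning wire, since two negations cancel. Because the fanout has at least three ports, by pigeonhole two ports always share a polarity, so a NOT gate is available in every case; I would record this, as NOT gates are needed again below.

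Second, if only a 2-in-3, 1-in-4, or 3-in-4 gadget is given, I would convert it to a 1-in-3. Attaching a NOT gate to each port reverses the sense of every port, so a $j$-in-$k$ constraint becomes a $(k-j)$-in-$k$ constraint; thus NOTs on all ports send 2-in-3 to 1-in-3 and 3-in-4 to 1-in-4. To drop from four ports to three I would force one port out with a fixed terminal, which is constructible from a free terminal by revealing cells as noted before Corollary~\ref{cor:simple}, leaving exactly one of the remaining three pointing in. These are all parsimonious simulations built from the abstract NOT simulation of Lemma~\ref{lem:simfan}, so composing them via Lemma~\ref{lem:compose sim} keeps parsimony.

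Finally I would verify that every construction stays silent and parsimonious and then invoke Corollary~\ref{cor:simple} to conclude. Parsimony is preserved by Lemma~\ref{lem:compose sim}, and silence is preserved because each cell's clue depends only on its bounded neighborhood, which lies inside tiles that are individually silent and interface silently with their neighbors; hence across all solutions in which a given internal cell is mine-free its clue is determined by the local solution and so is constant. The main obstacle is the turning-wire argument: the polarity case analysis must be carried out carefully to guarantee that a turn is realizable no matter how the generalized fanout's two configurations distribute in/out among its sides, and one must confirm that the fallback straight-NOT/turning-NOT composition actually fits the grid after refinement. The 1-in-4 step is the other point to get right, since a free terminal alone yields only an ``at most one in'' constraint, and a genuine forcing (fixed) terminal is required to recover ``exactly one in.''
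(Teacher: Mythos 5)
Your proposal is correct and follows essentially the same route as the paper: reduce to Corollary~\ref{cor:simple}, build turning wires from the generalized fanout via the polarity case analysis (turning wire directly, or else all ports share a polarity and a turning NOT composed with a straight NOT yields the turn), and convert 2-in-3, 1-in-4, or 3-in-4 gadgets to a 1-in-3 using NOT gates and terminals. Your added care about needing a \emph{fixed} terminal (not merely a free one) to drop a port from the 4-port variants is a fair sharpening of the paper's looser phrasing, but it stays within the paper's toolkit, since the text before Corollary~\ref{cor:simple} already constructs fixed terminals by revealing cells in a wire end.
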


\section{Hardness proofs} \label{sec:variants}

To demonstrate the utility of this PGO framework
and the ease with which it facilitates writing hardness proofs,
we provide a number of example applications
that prove hardness of Minesweeper and Minesweeper-like games.
In particular,
our examples are drawn from the video game \emph{14 Minesweeper Variants},
which despite its name actually implements significantly more than 14 Minesweeper variants.

\renewcommand\v[1]{\textsf{[\makebox[1.1em][c]{#1}]}}
\newcommand\vn[2]{\v{#1} \textbf{#2}:}

The variants in the game are identified by letters surrounded with brackets.
For instance,
\v V stands for \emph{vanilla} Minesweeper with no variants.
We first briefly summarize the rules of each variant in the game.

\medskip
The rules are separated into two categories.
The first category,
which we refer to as ``left rules''
(because they appear on the left half of the menu),
contains rules that add global constraints to the board
without changing the meanings of clues:

\medskip
\begin{tabularx}{\dimexpr\linewidth-\parindent}{lX}
    \vn{Q}{Quad}        & there must be at least 1 mine in every \(2\times2\) region \\
    \vn{C}{Connected}   & all mines are orthogonally or diagonally connected \\
    \vn{T}{Triplet}     & there can be no 3 mines in a row (orthogonally or diagonally) \\
    \vn{O}{Outside}     & all non-mines are orthogonally connected; all mines are orthogonally connected to any edge of the grid \\
    \vn{D}{Dual}        & all mines form non-orthogonally-touching \(1\times2\) dominoes \\
    \vn{S}{Snake}       & all mines form an orthogonally connected path which does not touch itself orthogonally \\
    \vn{B}{Balance}     & the number of mines in each row and column is the same \\
    \vn{T'}{Triplet'}   & mines must be part of a 3-in-a-row orthogonally or diagonally \\
    \vn{D'}{Battleship} & all mines form non-touching (even diagonally) \(1\times1\)'s, \(1\times2\)'s, \(1\times3\)'s, or \(1\times4\)'s \\
    \vn{A}{Anti-knight} & no two mines can be a knight's move away from each other \\
    \vn{H}{Horizontal}  & no two mines can touch horizontally \\
    \vn{U}{Unary}       & no two mines can touch orthogonally
\end{tabularx}
\medskip

The second category,
which we refer to as ``right rules,''
contains variants that apply to the clues themselves
(which by default count the number of mines in the 8 orthogonally or diagonally adjacent cells):

\medskip
\begin{tabularx}{\dimexpr\linewidth-\parindent}{lX}
    \vn{M}{Multiple}      & the grid is checkerboard colored; each mine in a shaded cell counts as 2 \\
    \vn{L}{Liar}          & each clue is off by exactly 1 \\
    \vn{W}{Wall}          & clues give the multiset of lengths of contiguous runs of mines in the 8 orthogonally or diagonally adjacent cells \\
    \vn{N}{Negation}      & the grid is checkerboard colored; clues give the difference between the number of mines in shaded and unshaded cells \\
    \vn{X}{Cross}         & clues give the number of mines up to 2 away in all 4 orthogonal directions \\
    \vn{P}{Partition}     & clues give the number of groups of mines (i.e.\ number of [W] clues there would be) \\
    \vn{E}{Eyesight}      & clues give the number of non-mines seen in all 4 orthogonal directions (including the clue itself), where mines block line of sight \\
    \vn{X'}{Mini-cross}   & clues give the number of mines 1 away in all 4 orthogonal directions \\
    \vn{K}{Knight}        & clues give the number of mines a knight's move away \\
    \vn{W'}{Longest wall} & clues give the largest length of contiguous mines (i.e. the largest [W] clue there would be) \\
    \vn{E'}{Eyesight'}    & clues give the difference of non-mines seen horizontally and vertically, where mines block line of sight
\end{tabularx}
\medskip

In most cases,
our reductions will ``incidentally''
satisfy some number of rules,
in that the reduction is unaffected
or can be easily adapted to work
whether the rule is included or not.
This is very efficient,
as a reduction that incidentally satisfies \(n\) extra rules
constitutes a hardness proof for \(2^n\) different variants.
\medskip

% To define the decision problems for these Minesweeper variants,
% the definition of \defn{consistent} changes,
% and in some cases (e.g. \vn{W}) we need to redefine
% \defn{partial board} because uncovered cells contain a different kind of data.
% After making these changes, the definitions of the
% inference and solvability problems for the variant are the same as before.

The notation used in the below gadgets is as follows:
\begin{itemize}
    \item
        Yellow-highlighted cells represent wires.
        It is enforced by the given clues
        that e.g.\ for the wire \textsf A,
        either all cells marked \textsf A are mines
        and all cells marked \(\overline{\mathsf A}\) are empty,
        or all cells marked \textsf A are empty
        and all cells marked \(\overline{\mathsf A}\) are mines.
    \item
        Black cells represent mines
        which can be locally derived from the clues contained within the same gadget
        (or in some cases the connecting regions between gadgets).
        Usually,
        it is immediately obvious how to derive these
        (they come directly from a single clue);
        mines that require deductions that are any more complicated than that
        are marked with red stars.
    \item
        White cells represent pre-revealed cells,
        which contain the appropriate type of clue.
        In most cases,
        the appropriate number is clear from the surrounding cells;
        a few numbers are given explicitly for clarity.
\end{itemize}

When two of our gadgets meet, there will be a pair of mines,
with one in each gadget,
which interact in that exactly one of them must be a mine.
We think of the edge between them as pointing towards the mine.
For most versions, tiles are glued together as depicted in Figure~\ref{fig:glue}.

All of our gadgets satisfy the properties listed in Section~\ref{sec:impl},
so we have hardness even when the number of mines is known,
when all uncovered cells have clues,
and when the input can't contain given mines.

\begin{figure}
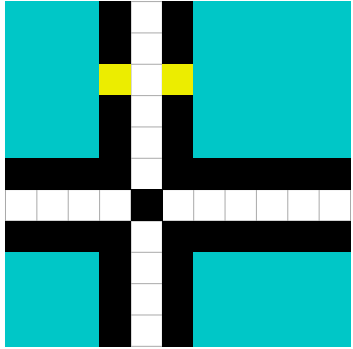

    \centering
    \fig{v-glue}
    \caption{%
        Relative positioning of tile-based gadgets.
        Cyan represents the interior of each gadget,
        which differs between gadgets,
        and everything else shows the common ``frame.''
        The extra mine outside the frames
        is placed to satisfy \v C constraints.
        Details differ between variants (e.g.\ dimensions of each tile),
        but the structure is the same.%
    }
    \label{fig:glue}
\end{figure}

\subsection{Vanilla clues}

These gadgets work with vanilla clues together with any combination
of \v M, \v L, \v Q, \v C, and \v{T'}.

\medskip\noindent
\begin{center}
  \begin{tabular}{ccc}
      free terminal & straight wire & empty space \\
      \fig{v-terminal} & \fig{v-straight} & \fig{v-empty}
  \end{tabular}
  \begin{tabular}{cc}
      fanout & 1-in-3 gadget \\
      \fig{v-split} & \fig{v-clause}
  \end{tabular}
\end{center}

\subsection{Solvability from an empty board}

In the original Minesweeper game,
the player does not start with any cells revealed,
and the player starts by clicking any cell
on the entirely unrevealed board
(which the game guarantees to be safe).
The above gadgets do not work in this setting,
because they rely on a large number of clues being pre-revealed.
However,
we also construct the following ``transparent'' gadgets,
where all clues can be deduced starting from this initial state.
This shows that Minesweeper solvability is coNP-complete
even in the setting where the player is initially given no clues.
(To guarantee the click is safe,
we can double the width of the grid,
and delay the decision of which half to place the circuit on
until after the click is made.)

\medskip\noindent
\begin{center}
  \begin{tabular}{ccc}
      free terminal & straight wire & turning wire \\
      \fig[valign=m]{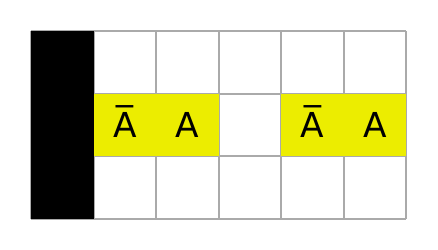} & \fig[valign=m]{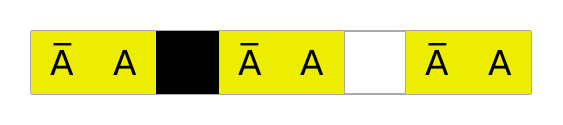} & \fig[valign=m]{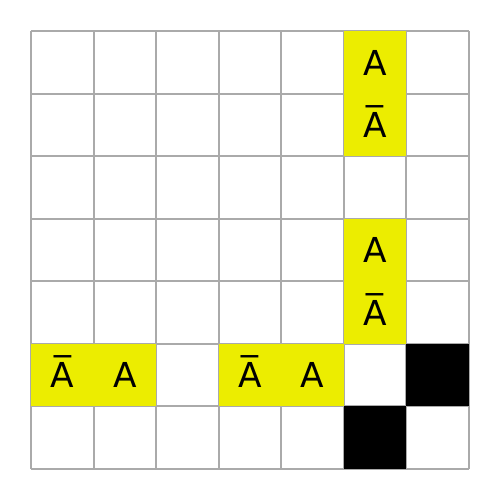} \\
      fanout & 1-in-3 gadget & shift \\
      \fig[valign=m]{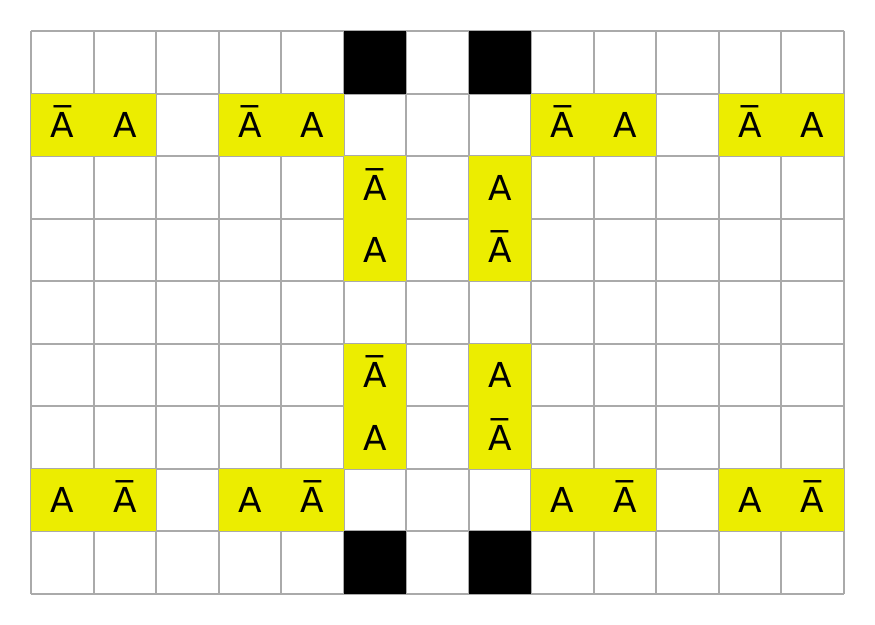} & \fig[valign=m]{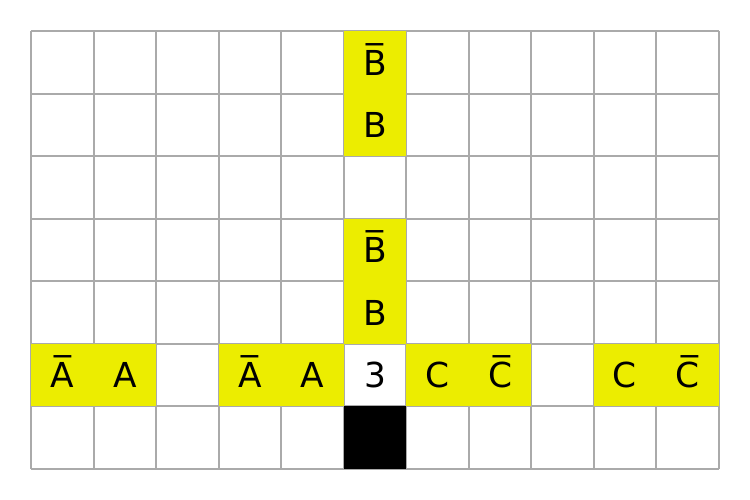} & \fig[valign=m]{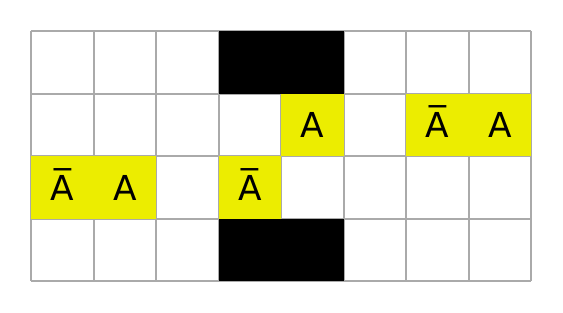}
  \end{tabular}
\end{center}

\subsection{Cross clues}

These gadgets work with \v X together with any combination
of \v M, \v L, \v Q, \v C, and \v{T'}.

\medskip\noindent
\begin{center}
  \begin{tabular}{ccc}
      free terminal & straight wire & empty space \\
      \fig[scale=0.87]{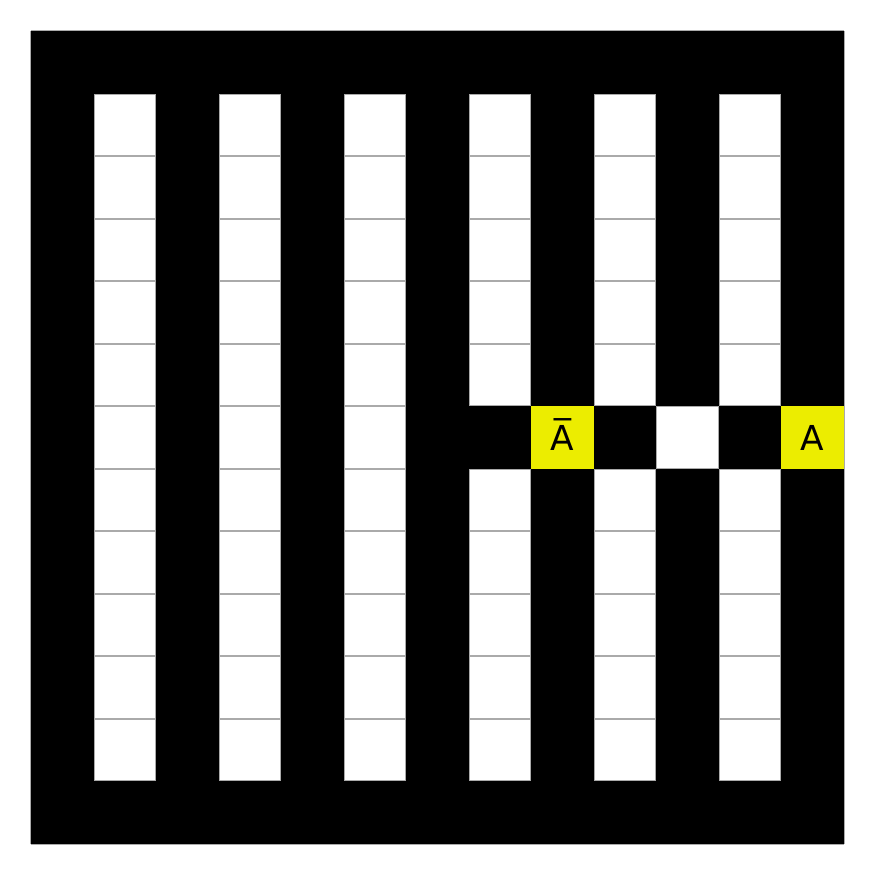} & \fig[scale=0.87]{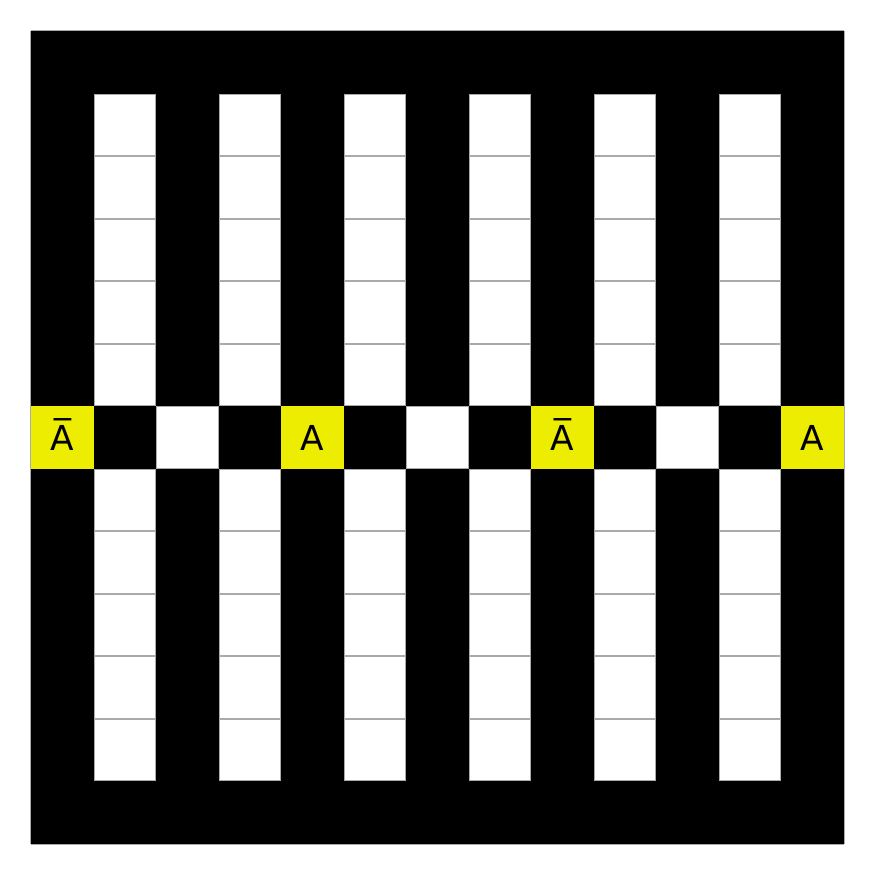} & \fig[scale=0.87]{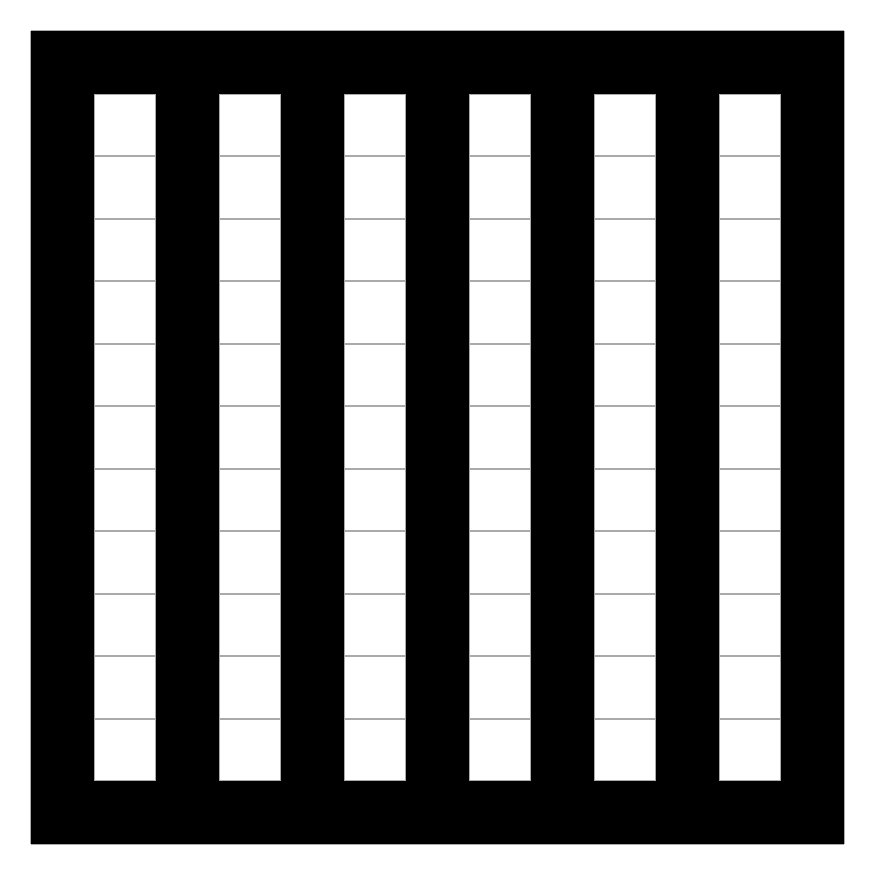}
  \end{tabular}
  \begin{tabular}{cc}
      fanout & 3-in-4 gadget \\
      \fig[scale=0.87]{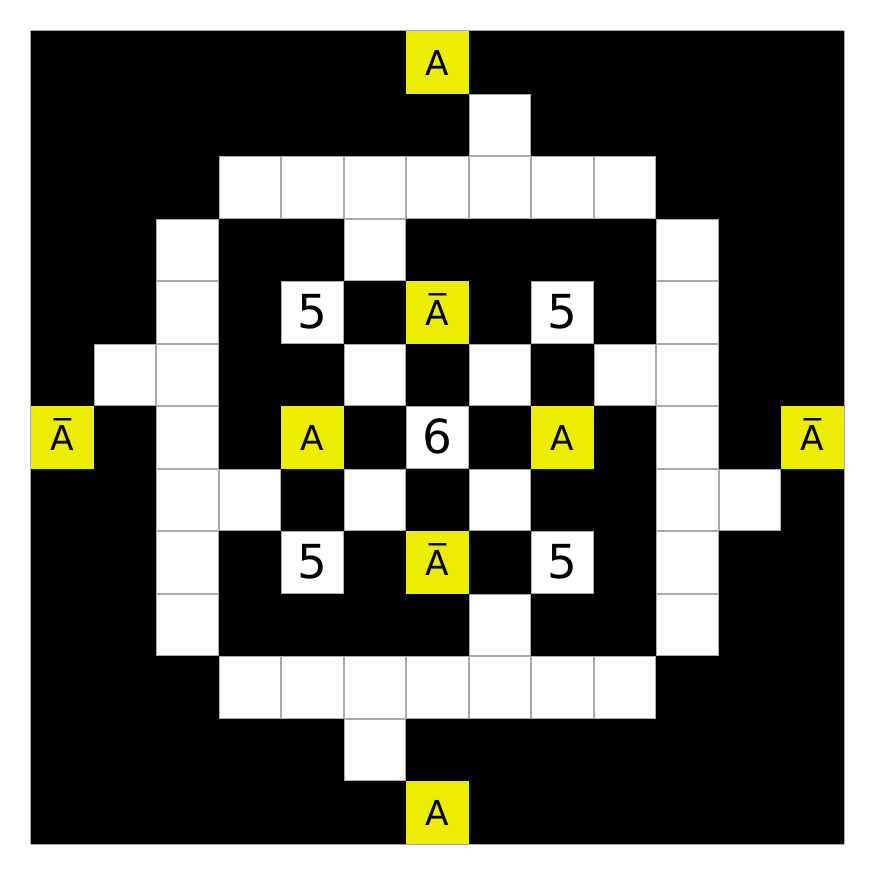} & \fig[scale=0.87]{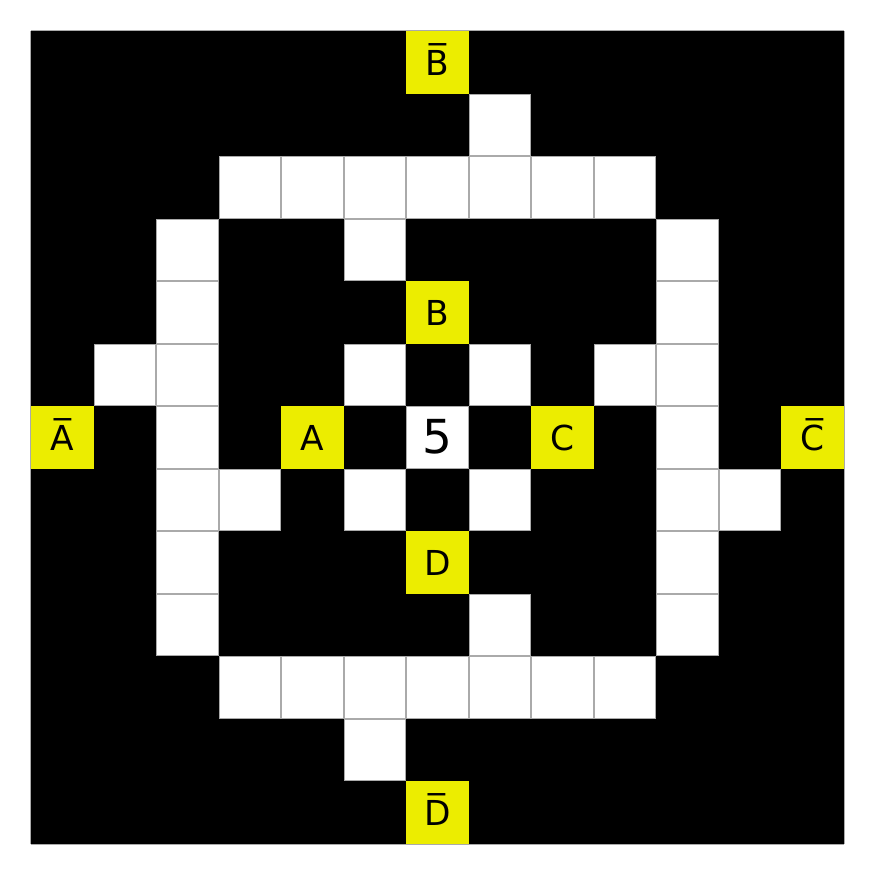}
  \end{tabular}
\end{center}

\subsection{Mini-cross clues}

These gadgets work with \v{X'} together with any combination
of \v M, \v L, \v Q, \v C, \v{T'}, and \v N.

\medskip\noindent
\begin{center}
  \begin{tabular}{ccc}
      free terminal & straight wire & empty space \\
      \fig{xp-terminal} & \fig{xp-straight} & \fig{xp-empty}
  \end{tabular}
  \begin{tabular}{cc}
      fanout & 3-in-4 gadget \\
      \fig{xp-split} & \fig{xp-clause}
  \end{tabular}
\end{center}

\subsection{Eyesight clues}

These gadgets work with \v E together with any combination
of \v L, \v Q, \v C, and \v{T'}.
Alternating tiles are reflected to make ports align.
Solutions don't all have the same number of mines in a single tile,
but they do globally:
the two cells interacting between tiles contain exactly one mine,
and ignoring those cells each tile has a constant number of mines.

\medskip\noindent
\begin{center}
  \begin{tabular}{ccc}
      free terminal & straight wire & empty space \\
      \fig[scale=0.87,valign=m]{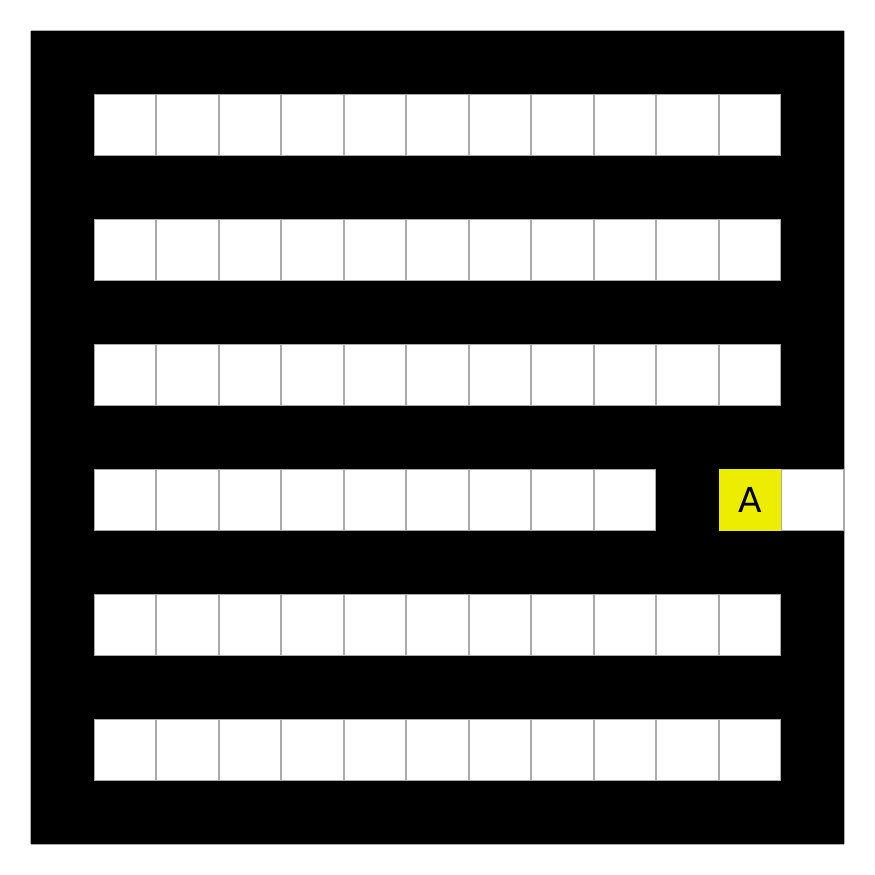} & \fig[scale=0.87,valign=m]{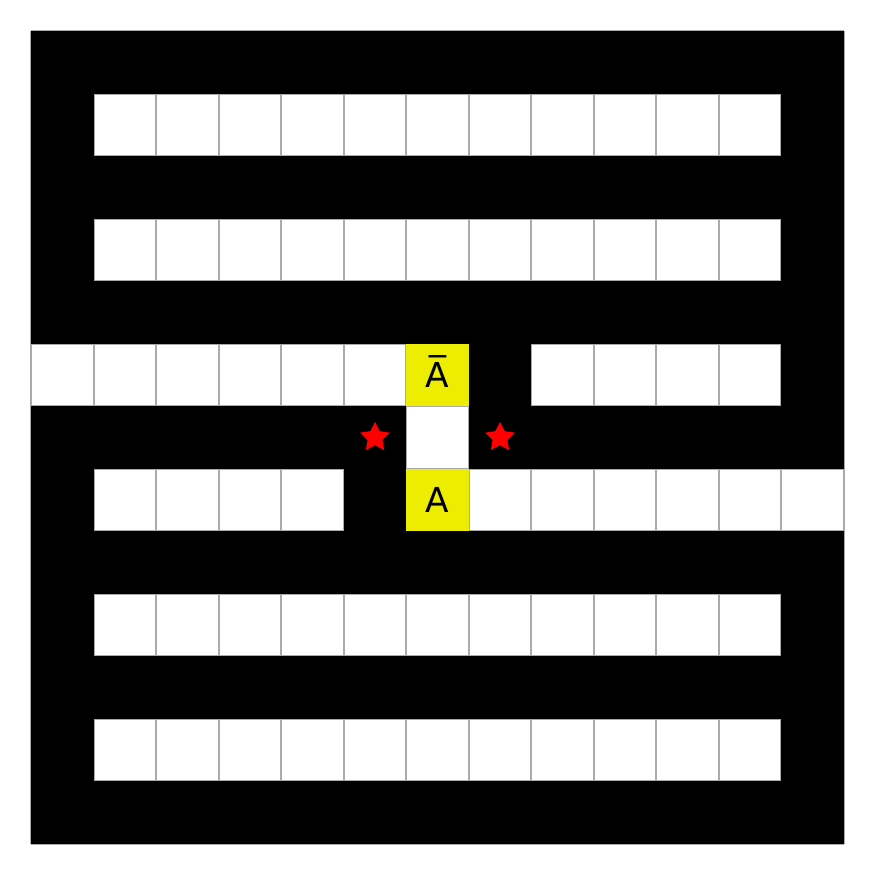} & \fig[scale=0.87,valign=m]{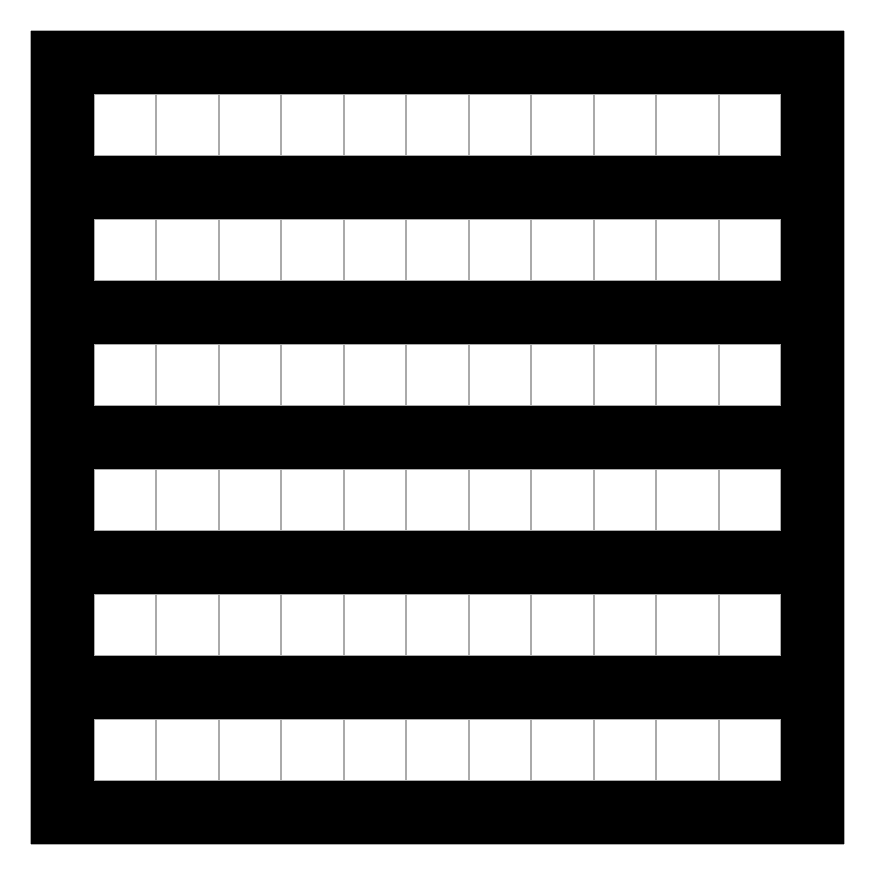}
  \end{tabular}
  \begin{tabular}{cc}
      fanout & 3-in-4 gadget \\
      \fig[scale=0.87]{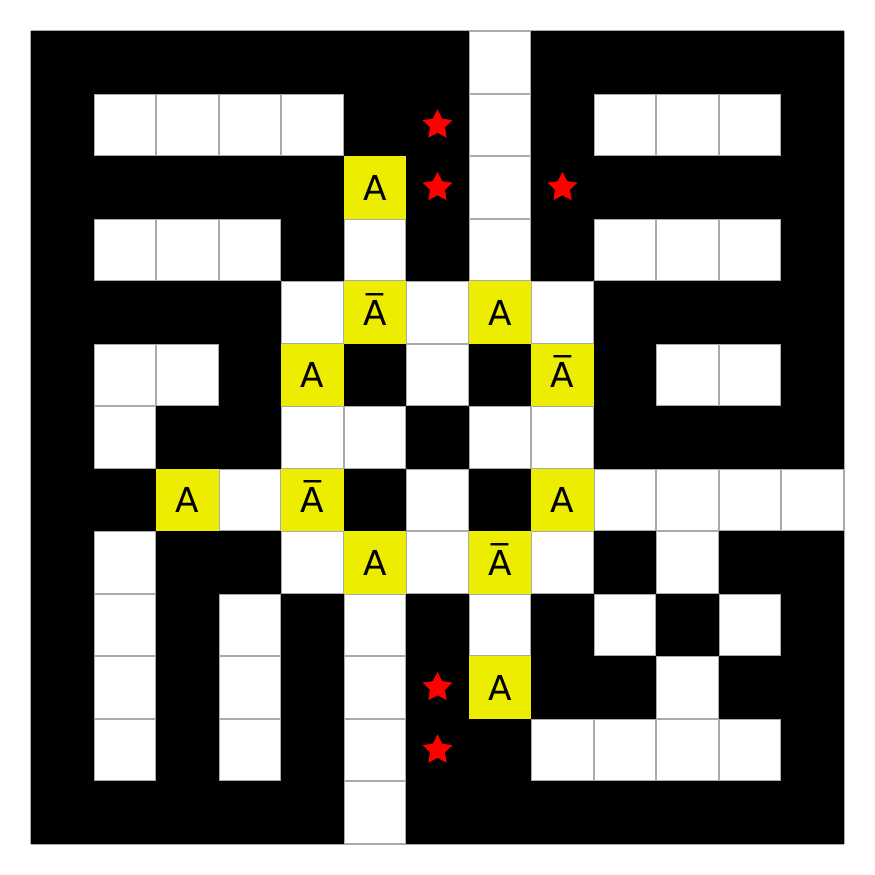} & \fig[scale=0.87]{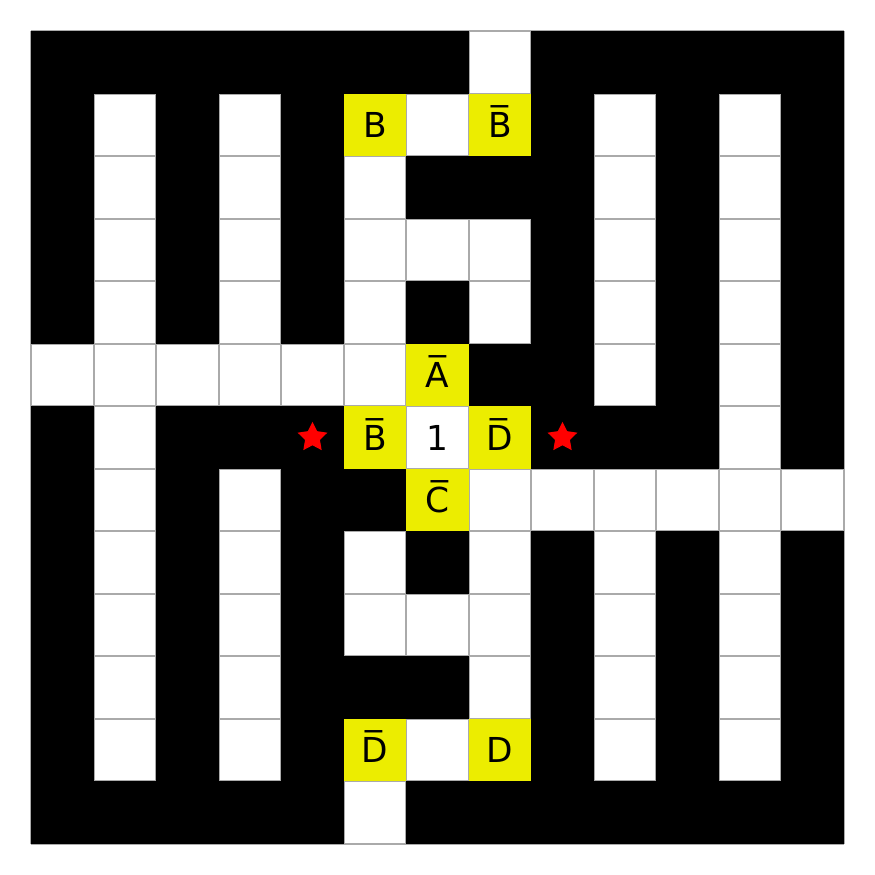}
  \end{tabular}
\end{center}

% \newpage % TODO temporary measures
\subsection{Wall and partition clues}

These gadgets work with either \v W or \v P, together with any combination
of \v U, \v T, and \v L.

\medskip\noindent
\begin{center}
  \begin{tabular}{cccc}
      \multicolumn{2}{c}{free terminal} & turning wire \\
      \multicolumn{2}{c}{\fig[valign=m]{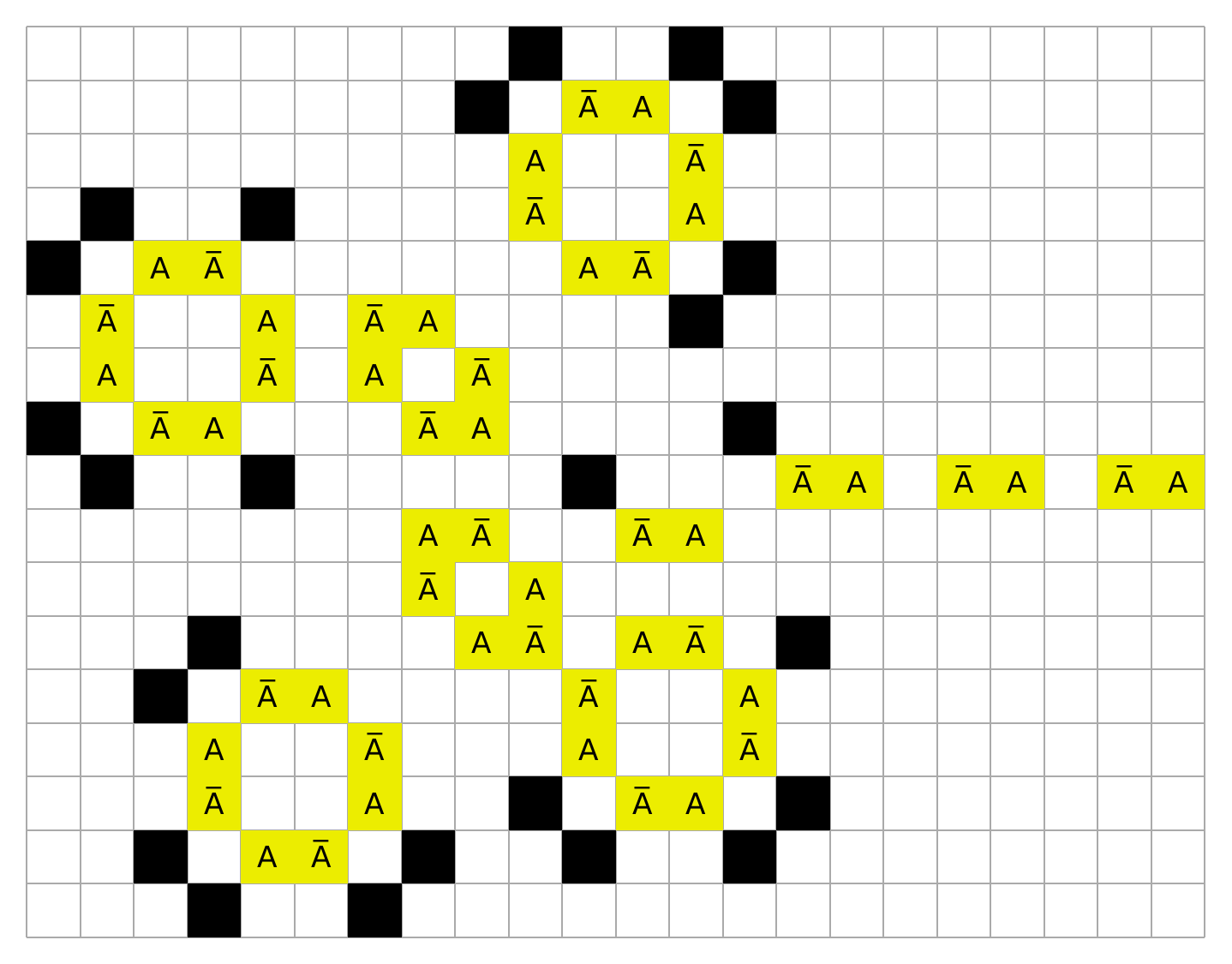}} & \fig[valign=m]{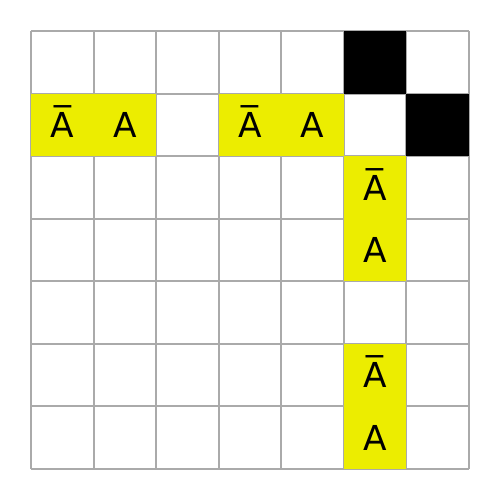} \\
      fanout & 1-in-3 gadget & shift \\
      \fig[valign=m]{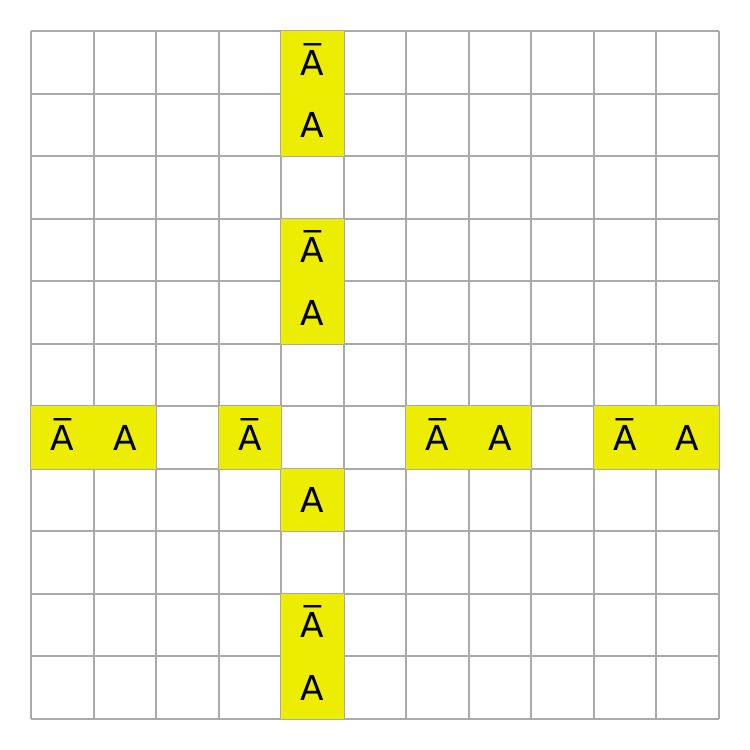} & \fig[valign=m]{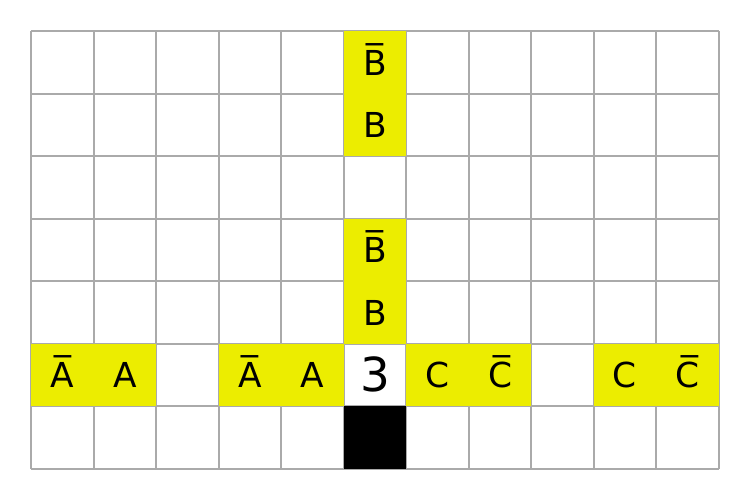} & \fig[valign=m]{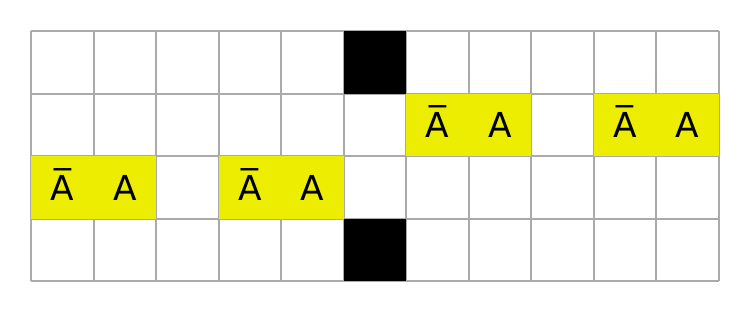}
  \end{tabular}
\end{center}

% \subsection{Negation clues}

% (TODO: won't bother putting this in until i get more variants)

% \section{Summary of results} \label{sec:summary}

% (TODO: this section will have a very pretty table eventually)

\printbibliography

\end{document}